\newcommand{\nc}{\newcommand} 
\nc{\ro}{\mathrm} 
\nc{\ca}{\mathcal} 
\nc{\A}{\ca{A}} 
\nc{\PP}{\ca{P}} 
\nc{\Z}{\ca{Z}} 
\nc{\C}{\ca{C}} 
\nc{\Com}{\mbox{\Large$\Phi$}} 
\nc{\Om}{\Omega} 
\nc{\keywords}[1]{\par\addvspace\baselineskip\noindent\textbf{\textit{Keywords---}} #1}
\newtheorem{thm}{Theorem} 
\newtheorem{supp}{Supposition} 
\begin{document} 

\begin{titlepage}

\begin{center}
\Large\textbf{A Projective Algebra for Ansatz: Resolving Wigner’s Puzzle and the Existence of External Realms} \\
\vspace{10mm}

{\large
Jonathan M. M. Hall\footnote{\textit{email:}  
jonathan.corescience@gmail.com \indent\,\,\,\,\textit{website:} http://drjonathanmmhallfrsa.wordpress.com} 
 \\
\vspace{-1mm}
\normalsize The University of Adelaide, Adelaide, South Australia 5005, 
  Australia}  

\end{center}
\vspace{5mm}
\setstretch{1.5}

\begin{abstract} 
Natural philosophy integrates scientific observation with abstract frameworks, often using a mathematical \textit{Ansatz} 
to hypothesise about physical phenomena.
Exploring the possibility of other universes, however, challenges assumptions that physical laws, like spacetime geometry, extend beyond our reality. 
This paper argues that mathematical abstractions, serving as a ‘telescope’ beyond physical constraints, enable such reasoning. 
Through a projective algebra formalism (Section 4), we model the mechanism of \textit{Ansatz}, 
abstractly describing physical objects. This yields a resolution to Wigner’s ‘unreasonable effectiveness’ 
via cardinality equivalence (Section 5) and clarifies terms like ‘evidence’ and ‘existence’ (Section 6) 
to align with the conventions used in physics. 
A Cantor-inspired paradox shows no universe can contain all mathematical abstractions 
(e.g., sets, numbers), as its power set exceeds it, necessitating an external 
abstract realm (Section 6.4). 
This logical necessity, which holds even in the context of alternative set theories like New Foundations, 
provides evidence for a minimal external universe as an abstract realm, supporting Mathematical Realism.
This result is not specific to the formalism, as long as we accept that the principles of set theory 
are mathematically valid. 
As abstract entities elude empirical detection, logical evidence is apt, guiding future science and 
philosophy research and fostering interdisciplinary inquiry.
\end{abstract} 

\keywords{mathematical realism, multiverse, wigner's puzzle, ansatz, set theory}

\end{titlepage}

\tableofcontents
 \newpage

\setstretch{1.45}

\vspace{-3mm}\section{Introduction} 
\label{sect:intro} 

The study of physics inherently
 requires both scientific observation and philosophy. 
The tenets of science, and its axioms of operation, are not themselves 
scientific statements, but philosophical statements. 
The metaphysical underpinning of science includes statements about the process of science  
itself, and the nature of both the philosophical and material 
objects involved in a scientific investigation. 
Historically, the profound 
philosophical insight precipitating the birth of physics was that 
scientific observations and philosophical constructs, such as logic 
and reasoning, could be married together in a way that allowed 
one to make predictions of observations in science based on theorems, 
and proofs in logic, as a branch of philosophy, rather than simply 
to collect data on phenomena without interpretation. 
This natural philosophy requires a philosophical 
`leap', in which one makes an assumption about what 
abstract framework applies most correctly. Such a leap, called \textit{Ansatz},
 is usually arrived 
at through inspiration and an integrated usage of faculties of the
 mind, rather than 
a programmatic application of certain axioms. 
Nevertheless, once a set of fundamental principles are decided upon, 
a subsequent programmatic approach allows 
enumeration of the details of the ensuing formalism for the purposes of 
such an application. 
It seems prudent to apply a programmatic approach 
to the notion of \textit{Ansatz} itself and to clarify its  
process metaphysically, 
in order to gain a deeper understanding of how it is used 
in practice in science; but first of all, let us begin with 
the inspiration. 

\vspace{-3mm}\section{A metaphysical approach} 
\label{sect:meta} 
 
In this work, a programme is laid out for addressing the philosophical  
mechanism of \textit{Ansatz}. In physics in general, 
a scientific prediction is made firstly 
by arriving at a principle, usually at least partly mathematical  
in nature. The mathematical formulation is then `guessed' to hold in  
particular physical situations. The key philosophical process involved  
is exactly this `projecting' or `matching' of the self-contained mathematical  
formulation with the supposed underlying principles of the universe.  
No proof is deemed possible outside the mathematical   
framework, since proof, as an abstract entity, is an inherent 
feature of a mathematical (and philosophical) viewpoint.  
Indeed, it is difficult to imagine what tools a proof-like  
verification in a non-mathematical context may use or require.  

This view of the operation of the 
scientific method may at first seem controversial,
but in fact is equivalent to more conventional statements of the scientific 
method from Galileo \cite{galilei1914dialogues} to Feynman 
\cite{feynman1994character}.
In the traditional method, the flow from making observations to the 
formation of 
hypotheses, the making of predictions and the gathering of experimental 
data forms the skeleton of the scientific method. 
Note that the method by which the hypothesis is formed is not explicitly 
defined as part of the operations of science in this skeleton framework. 
A careful defense of the details within the 
science method, including the process of 
proposing an hypothesis as an \emph{Ansatz}, entails 
an important area of study and represents an ambitious task to 
say the least. 
 For the purposes of this investigation, it suffices to state 
that the main results of the paper 
are not specifically contingent on the details of a special construction 
of scientific method, but represents an investigation into the general 
aspects of a practical pursuit of the scientific method as it currently stands. 

It is possible that the current lack of clarity in the philosophical 
mechanism involved in applying mathematical principles to the universe 
 has implications for further research  
in physics. For example, in fine-tuning problems of the Standard Model 
of particle interactions (such as for the  
mass of the Higgs boson \cite{DGH,Martin:1997ns} and  
the magnitude of the cosmological constant 
\cite{ArkaniHamed:1998rs})  
it has been speculated that the existence of multiple universes  
may alleviate the mystery surrounding them 
\cite{Wheeler,Schmidhuber:1999gw,Szabo:2004uy,Gasperini:2007zz,Greene:2011}, 
in that   
 a mechanism for obtaining the seemingly finely-tuned value  
of the quantity would no longer be required - it simply arises statistically.  
However, if such universes are causally disconnected,  
e.g. in disjoint `bubbles' in Linde's Chaotic Inflation framework 
\cite{Linde:1983gd,Linde:1986fc}, 
there is a great challenge in even  
demonstrating such universes' existence, and therefore draws into question  
the rather elaborate programme of postulating them.  

Contemporary philosophical discussions include early universe cosmology 
\cite{Smeenk2014122}, quantum entanglement \cite{Barrett2014168}, 
causality \cite{Hoefer2014128,Valente2014101,Healey2014156} and quantum gravity 
\cite{Ashtekar2014}.
Setting aside for the time being the use of approaches that constitute novel  
applications of known theories, such as the 
 exploitation of quantum  
entanglement to obtain information about the existence of other universes 
\cite{Tipler:2010ft},   
a more abstract and philosophical approach is 
postulated below.  
 
\vspace{-3mm}\subsection{The universality of mathematics}
\label{subsect:univ}

Outside our universe, one is at a loss to  
intuit exactly which physical principles continue to hold. For example,  
could one assume a Minkowski geometry, and a causality akin to our current  
understanding, to hold for other universes and the `spaces between', 
if indeed  
the universes are connected by some sort of spacetime? 
Such questions are perhaps too speculative to lead to any 
real progress at this time;  
however, if one takes the view of 
Mathematical Realism, which often underpins 
the practice  
of physics (as argued in Section \ref{sec:mr})  
and the tool of \textit{Ansatz}, one may identify the 
 \emph{mathematical principles} as principles that should hold in any  
physical situation - our universe, or any other. 
This viewpoint is more closely reminiscent of Level IV in Tegmark's 
taxonomy \cite{Tegmark:2003sr} of universes. 
 One may imagine that mathematical theorems and logical  
reasoning hold in all situations, and that all `universes' - a term in need of  
a careful definition to match closely with the sense used in the 
practice of physics - 
are subject to mathematical inquiry. In that case,
 mathematics, and indeed, our own  
reasoning, may act as a `telescope to beyond the universe' 
in exactly the situation where all other  
senses and tools are drawn into question. 

To achieve the goal of examining the process of the \textit{Ansatz} - 
of matching mathematical ideas to non-mathematical entities (or phenomena),  
one needs to be able to define non-mathematical objects  
abstractly, or mathematically. Of course, such entities that can be  
written down and manipulated are indeed not `non-abstract'. 
This is so, in the same  
way that, in daily speech, an object can be referred to only by making  
an abstraction (c.f. `this object', `what is \emph{meant} by this object'  
`what is \emph{meant} by the phrase `what is meant by this object' ', etc.).  
 This nesting feature is no real stumbling block to the 
consistent construction of a formalism, as one can simply 
identify it as an attribute of a particular class of abstractions - 
those representing non-mathematical objects. Thus, a rudimentary 
(but sufficient, in this respect) 
formulation of non-mathematical objects in a mathematical  
framework
 will form the starting point for a new and fairly general formalism. 
Indeed, it is the general nature of this approach that gives it power, 
since very little structure is assumed about the non-mathematical 
`real-world' objects - only basic assumptions about the way 
such objects are referred to and handled in an abstract formalism; 
therefore, theorems and outcomes obtained from such a formalism are 
less prone to contamination from artefacts of the details of the formalism 
itself. 
 
After developing a mathematics of non-mathematical objects, one can then  
apply it to a simple test case. Using the formalism, one could, in principle, 
derive a  
process by which an object is connected or related somehow to its description,  
using only the theorems and properties known to hold in the new framework.  
The formalism could then even be applied to the search for other 
possible universes, 
and  the development of a procedure to identify properties of such  
universes. In doing so, one could indeed make a genuine discovery 
so long as the phenomenological properties are not introduced `by hand', 
and the underlying constraints built into the formalism are followed 
precisely.  
This follows the ethos of physics, whereby an inspired
 principle (or principles) is  
followed, 
sometimes seemingly remote from a phenomenon being studied, 
but which nevertheless 
has profound implications which are neither introduced artificially, 
nor always perceived contemporaneously, and  
which ultimately guide the course of an inquiry or experiment. 

There is an additional  
motivation behind this programme, beyond addressing the  
mechanism of the \textit{Ansatz}, 
which is to attempt to clarify philosophically Wigner's   
puzzle \cite{Wigner}. 
It is a key result of this  
paper to identify this kind of `effectiveness' as a kind of fine-tuning  
problem, i.e. that it is simply a feature that naturally 
arises statistically from the structure of the new formalism.

\vspace{-3mm}\subsection{Evidence}
\label{subsect:evintro}

In the special situation where one uses mathematical constructs exclusively, 
the type of evidence required for a new discovery 
would also need to be mathematical in nature, and  
testing that it 
satisfies the necessary requirements to count  
as evidence (in the usual scientific sense) would be achieved by using    
 mathematical tools within the new framework. 
To explain how this might be done, consider that evidence is usually taken to 
mean an observation (or collection of observations)
 about the universe that supports 
the implications of a mathematical formulation prescribed by a particular theory. 
 Therefore, it is necessary to have a strict separation between objects 
 that are considered `real/existing in the universe', and those that 
 are true mathematical 
statements that may be applied or \emph{projected} (correctly or otherwise) 
onto the universe. 
This idea is made more precise in Section~\ref{subsect:ev}. 

Note that, even for evidence in the usual sense,  
any observations experienced by the scientist are indeed abstractions also. 
For example, in examining an object, 
photons reflected from its surface can interact in the eye 
to produce a signal in the brain, and the interpretation of such a signal 
is an object of an entirely different nature to that of the actual photons 
themselves - 
much observational data are, in fact, discarded, and most crucially,
 the observation is then fitted into an abstract framework 
constructed in the mind. 
As a result, evidence can be thought of as prescriptive i.e. like a recipe, 
contingent on a pre-supposed mental model of experience, rather than an 
axiom. Whether a `phenomenon' constitutes evidence for a theory relies on 
an abstract relationship between the phenomenon cast as an item drawn out 
abstractly from experience to the point of being nameable, and another 
abstract construct, `the theory'. The phenomena themselves do not become 
evidence until they are encapsulated in a framework and viewed in this 
special kind of abstract relationship. 
Additionally, note that this 
does not mean evidence does not exist in this viewpoint, quite the contrary. 
It means evidence exists in the most real possible sense that every other 
kind of physical object exists. But it means that evidence, along with all 
physical phenomena, are required to inherit these abstract relations in order 
to be operated on for the purposes of the scientific endeavour, where evidence \emph{means something} - 
i.e. it needs to be juxtaposed with a theory.

In a very proper sense, the more abstract is the 
more tangible to experience, and the more material  
is the more alien to experience. Therefore, 
it seems reasonable to suggest that a definition of 
evidence in familiar scientific settings may already be equivalent 
to evidence in an entirely mathematical framework, 
and that the distinction between 
the two may be a matter of convention. 

Upon clarifying the type of logical evidence that is admissible for inquiry of mathematical objects, 
we then use a newly-developed projective algebra as a logical tool, 
demonstrating through Cantor’s Paradox that no single universe can contain all such abstractions, 
as the power set of any universe exceeds its capacity. This logical necessity proves 
the existence of abstract entities outside our universe, 
which we interpret as residing in external universes that are \emph{not} physical spacetime continua, 
but realms of \emph{mathematical structure}. 
This offers a philosophically grounded argument that relies solely on logic, 
the only admissible evidence for extra-universal entities.

\vspace{5mm}
\textbf{Summary}\\
To summarise and encapsulate the final thesis and import of this work that will follow from the presentation of the formalism, we will find  
\emph{there is no logically consistent way to define a universe that includes all abstractions (such as mathematical objects), and that such a claim need only rely on logic to prove it}.  
That is, objects outside the present universe can be proved to exist using a self-consistency definition, and that constitutes the only evidence that would be allowed in this case. 

\vspace{-3mm}\section{Mathematical Realism} 
\label{sec:mr} 

\vspace{-3mm}\subsection{Historical background}

In the `world of ideals' (as developed from the notion of 
Plato's `universals' \cite{Soc},  rather than Berkeley's Idealism \cite{Berk1}) 
there are certain abstract objects (`labels' or `pointers'), 
which refer to material objects. The 
\textit{Ansatz} arises by guessing and then assuming a particular connection 
between those pointers and other abstract objects. 
This allows material objects to be entirely objective 
(to avoid a solipsism), 
but also entirely subservient in some sense, to abstract objects. 
An observer can only indirectly interact via interpretation.  
Thus, the abstract affects the abstract, and abstract the physical. 

One might argue contrarily, 
that entities existing in the abstract mind 
are altered via natural or material means \cite{Jackson}. 
Certain mental states are invoked upon interpretation of the empirical, 
regardless of the existence of patterns, which are abstractions - and 
that this would be true even if the universe were  
`unreasonable', i.e. not in general amenable to rational inquiry.  
However, it is the point of view expounded in this treatise, 
that it is not true that 
material objects can interact so directly with other material objects, 
but only indirectly, 
since the interactions themselves would otherwise need to be materials.  
Instead, an interaction is taken to be necessarily an abstract link, i.e. it has to follow some 
pattern, rule or law, even in the simple case that
 the interaction between the objects 
is simply that they are at rest with each other. 
That is, the notion of `interaction' is \emph{necessarily} abstract. 

It is this 
feature of positing only indirect relation among physical objects 
that takes the form of a view opposite to that of 
Epiphenomenalism \cite{Jackson,Huxley}. 
The difference in viewpoint may appear to hinge on the semantics of 
the terms `interaction' and `abstract', but the goal is simply to 
characterise the oft-proved successful methods of physics  
\emph{as already applied} in practice, through a particular choice in 
philosophy. That is, by simply identifying (and thus labelling) the salient 
features of the philosophy, an investigation into the more general 
(and philosophical) aspects of the practice of physics can be conducted 
without need to re-demonstrate their soundness.  
On the other hand, the discussion of the soundness of such a philosophy 
on other grounds constitutes a tangent topic, and 
 the presentation of a complete enumeration of 
various emendations or contrary viewpoints on the choice of philosophy 
represents an interesting though separate, topic. 

Consider, as an example, a material object that collides with another 
material object, causing a change in the configuration of the system 
that includes both the objects and their immediate environment. The 
process is simultaneously observed by a person who interprets the interaction. 
At the point of the collision, it is of interest to characterise the 
kind of entity (material, or notional/abstract) identifying the collision or 
interaction itself. 
In the above `reverse-Epiphenomenal' viewpoint, 
no matter the physical mechanism for the collision, it is impossible to 
define the interaction without necessarily constructing a notional or 
abstract component. 
The reason for this is that the interaction is \emph{not defined} until 
there exists in the universe a rule or law of physics which the objects 
follow upon collision (such as the electrostatic potentials that repel 
each other at the atomic level). The physical laws themselves are notional, 
and whether they can be incorporated as elements of the physical 
universe in a consistent manner is discussed below. 

As an alternative, Libet suggests an Epiphenomenal viewpoint - 
that the interaction occurs as a physical entity, and 
the abstract states we assign to the objects in our own 
minds cannot be affected 
by other abstractions, but only by physical processes (with a 
caveat that a ``conscious veto'' to act on a physical process 
is possible for an individual) \cite{Libet}. 
This ties into the notion of \emph{qualia} \cite{Jackson}, 
in that the distinct quality of experience among individuals is incomparable 
in that it cannot be transported to a shared medium, 
and efforts to generate the same feelings in others cannot be verified 
externally - the experience depending intricately 
on the physical makeup of a person generating the qualia. 

Historically, the Process Philosophy, as developed by Whitehead, 
 helped tame the issue of the mind-matter duality \cite{whitehead2010process}. 
By characterising both the physical and abstract phenomena as derivative 
entities known as `occasions of experience', the processes 
and interactions themselves are promoted as ontologically fundamental. 
In the case of the above example, the type of entity identified 
as the collision is quite different in its properties to what is 
normally thought of as an object, and is in this respect 
concordant with reverse-Epiphenomenal view. 

From a Naturalistic monist viewpoint, one views all states of abstraction 
as emergent phenomena of the physical states of objects, including the brain, 
and its inherent mechanisms \cite{PAPQ:PAPQ053,mi2007naturalized}. 
`Imaginary' abstractions, i.e. those which do 
not directly occur via external stimulus or other input from the physical 
nature outside the perceiving individual, such as mathematical 
curiosities, are simply deemed `not existing', as they do not feature directly 
in the interactions of physical objects, and are safely ignorable 
in a construction of the workings of a physical system (except for those 
with a solipsistic viewpoint) \cite{ross2013scientific}. 

Of these contrasting views, the latter Naturalistic view is the most 
closely related to the metaphysical philosophy used in this article 
as a basis for the practice of science. 
Although a Functionalist framework is chosen (see below), Lewis argues  
that Functionalism is not complete as a metaphysic, and entirely compatible 
with versions of Physicalism (those that do not claim that all 
mental states are physical in nature), among other metaphysics 
\cite{Lewis1972-LEWPAT}. 
The Naturalistic and the Mathematical Realism viewpoints can be compared 
directly by bringing them into a consistent semantic, as follows. 
The Naturalistic viewpoint advocates distinct ontological status to 
material objects, and abstractions as non-existent artefacts - which 
nevertheless are useful in the construction of more sophisticated 
abstractions that more closely mirror the actual workings of the universe, 
for the practice of science. 

In the Mathematical Realism viewpoint, the same understanding of material 
objects may remain, and only a subtle difference occurring in the status 
of the abstractions, which are considered existing, but separate. 
Clearly, these two metaphysics may be operationally identical, and that the 
labelling of the space inhabited by abstractions as non-existing or existing 
could be a semantic one, since the properties that the abstractions follow 
(as far as they are defined) are not in contention. 
However, the reason for the strict separation of the abstractions into 
their own, independent `box' will become apparent in Section~\ref{subsec:univ} 
where it is discovered that a  
universe that encompasses both the abstract and physical objects, 
in the Naturalistic view, cannot be defined in a consistent way. 
This contrasts Field's Nominalism \cite{Field} which challenges such realism, arguing no need for 
distinct existence for mathematical entities, and for a reformulation of science to bypass 
these constructions, positing that they are not ontologically real.
As a counter-argument, the formalism presented in Sec~\ref{sect:form}, 
and the `relating theorem' introduced in Eq.~\ref{eq:ZIext} provides logical evidence 
for the necessity of the existence of mathematical entities.
It is therefore argued that it is suitable, in practicing science, to adopt the 
Mathematical Realism view (or a modified version of the Naturalism, 
cast as a semantically identical framework, such as suggested above).

\vspace{-3mm}\subsection{Contemporary developments}

The evolution of metaphysics from rudimentary Cartesian Dualism \cite{Desc} 
to that proposed by Bohm \cite{Bohm} demonstrates the usefulness of a mathematical 
viewpoint in clarifying and enriching philosophical ideas as they pertain 
to physics, and the universe as a whole. 
Abstract relationships are centralised, and underlying principles of 
matter, rather than a catalogue of the immediate properties,
 are interpreted to have the greater influence in accessing the  
fundamental nature of the physical world. 
The shift in perspective is that 
the simple and elegant descriptions of a physical system are those which  
incorporate its seemingly disparate features into an integrated whole. 
One then goes on to 
postulate a relationship between the physical object in question  
and the machinery of its abstract description. 

Similar philosophical views have found success in the field of neuroscience, 
such as the work of Dam\'{a}sio in characterising consciousness \cite{Damasio1, Damasio2}. 
Instead of treating the mind and body as separate entities, one postulates 
an integrated system, whereby mechanisms in the body, 
such as internal and external 
stimuli, result in neurological expressions such as emotion. 
Emotion and reason are thus brought together on equal footing, since both 
actions are the result of comparing and evaluating a variety of stimuli, 
including other emotions, to arrive at a response. 
That is, from a modern perspective, 
just as relationships between physical objects are  fundamental 
in characterising the intangible properties of their whole, 
it is the abstract relationships between faculties in the body and brain, 
such as interactions and stimuli, that characterise consciousness. 

The identification of phenomena with abstract descriptions, such as behaviour 
and interactions,  
was formalised in Putnam and Fodor's Functionalism \cite{Putnam,Block}. 
Functionalism provides the ability to consider 
indirect or `second order' explanations for the nature of objects. 
Unlike Physicalism, which identifies the nature of objects with the instances 
themselves that occur in the real world, Functionalism entails 
the generalisation of the objects in terms of their functional behaviour. 
These more general classes of object are identified by the features that 
all relevant instances of the object have in common, and so the nature of the 
 object 
becomes more ubiquitous, even if more abstract.  
This 
may be nothing more than a semantic shift, in cases where one is 
at liberty to allow the 
definitions of certain abstractions 
more scope as needed (such as pain or consciousness), 
so that they  more closely match their use in 
daily human endeavours. 
Further abstractions, 
such as `causes', an integral part of many areas of science, follow naturally. 

Colyvan, building on Quine and Putnam, argues that we must accept the existence of 
mathematical entities (e.g., numbers, sets) because they are indispensable to our 
best scientific theories (e.g., physics uses real numbers for space-time). 
This supports mathematical realism/platonism, as ontology follows from what our 
theories quantify (ie as 'bound variables') \cite{Colyvan}.

In this work, Mathematical Realism only diverges from Colyvan's Platonism by 
inclusion of an external realm, positing a separate universe for abstracts rather than 
embedding them in physical theories.
Taken on its own, Functionalism represents a deprecated metaphysic, 
insufficient for a complete account of internal states of a physical 
system \cite{Searle}, 
and is therefore commonly employed 
 simultaneously with another metaphysic (such as 
Physicalism). By not providing for a `real' or 
material existence independent of an object's functional behaviour, 
a bare Functional philosophy 
is not wholly suitable 
for describing the process of physics, which involves identifying 
material objects while projecting upon them an \textit{Ansatz} from some (abstract) theory. 

As an example of the shift in perspective that Functionalism brings, 
consider the following scenario of an abstract entity based on 
real-world observations, such as an emotion/state 
of affairs, etc.,  
whose cause is in want of identifying. Let this object be denoted 
as a `feature'. 
In this metaphysic, instead of the cause simply being identified as a  
specific phenomenon, or a mechanism 
based on the real world, 
the cause is characterised by the elements of 
an abstract object, $G$, which represents 
a collection of pointers to the relevant parts of the mechanism. 
For example, in the experience of pain, the mechanisms that cause 
it in an individual, the measurable features it produces, and the abstract 
notion of `pain' itself, which is $G$ in this case, are related 
in the following way. 
If we posit, for the moment, an abstract interpreting function, 
$i:R\rightarrow A$, relating the real world, $R$, to the world of ideals, $A$,
and similarly, a pointer function that goes in the opposite direction, 
$r:A\rightarrow R$, one can 
establish a relationship between the cause and the feature. 
 For a mechanism $m$, a set of features (visual characteristics, etc.) 
$F$, and an element $g_1\in G$, 
 define $m=r(g_1)$, which resides in $r(G)$, and 
 then $i(r(g_1))=i(m)\in F$. In this (fairly loose) symbolic description, 
the features and the cause become related via the statement $i(r(G))=F$.
The reverse-Epiphenomenal philosophy, akin to Interactionism, 
is to realise that the 
relation itself between the two entities is an abstract one, whose 
attributes it may benefit us to characterise. 

More recent constructions on the topic of the relationship between 
mathematical and real-world objects, such as the Inferential 
Conception approach \cite{Bueno2011-BUEAIC}, also predominantly take 
the view that certain mathematical structures play an important role 
in describing the behaviour of phenomena observed empirically \cite{LadymanFrench}. 
While superficially similar in approach, what marks this work as a different 
`take' on the issue of relating (or `mapping') mathematical and real-world objects 
 is the subtle shift in the underlying fundamental assumptions 
of ever being able to prescribe an explicitly denotative (let alone enumerable) description.  
This work propounds a description of a formalism that tries to do explicitly \emph{less} than 
provide a comprehensive description of objects (either mathematical or real-world). Instead, 
the emphasis is kept to addressing a specific problem, namely, the description of \emph{Ansatz}; 
and relevant information is encoded into the formalism 
that pertains only to it, which is developed, while leaving the theory free for more explicit descriptions to be encompassed later. 
In essence, the idea is simply to glean some possible aspect of the nature of the philosophical mechanism that underpins 
a crucial tool in physics (the \emph{Ansatz}); and because of this, one requires metaphysics as another important tool for the analysis of 
the methods of physics. 
As a result of this more general and less all-encompassing approach, 
a surprising array of outcomes are brought forth - in particular, new insight into the unreasonable effectiveness 
of mathematics \cite{Wigner}, fine-tuning problems, 
and other quandaries that lie directly in the purview of this kind of focused interrogation procedure. 

There is also a reciprocity involved in this kind of analysis - the problems faced by physicists help to shed light and guide investigation on certain points of philosophy 
that, up until now, have only been elucidated insofar as is important for addressing 
concerns in the ability to pin down definitively the fine (and sometimes mathematical) details of the objects in question. The view expounded in this work is, however, 
more conservatively along the lines of using a metaphysical description as a \emph{new tool} for physicists (or scientists at large) to use, thus opening hitherto less-accessible parts of the study of philosophy to a new demographic of thinkers by appealing to similar language and lines of thought, 
purely for addressing a small class of problems for which the formalism 
was explicitly designed. 
To summarise, this work principally represents a characterisation of the parts of 
metaphysics that pertain to physics \emph{as handled in practice} - questions that lie beyond the purview of traditional 
scientific inquiry but that nevertheless are important in understanding vital aspects of physics itself. 

Parallel investigations into the nature of objects, and the details 
of their interaction mechanisms, include 
the quantum mechanics of measurement in the Copenhagen and von-Neumann--Wigner 
interpretations \cite{stapp}, uncertainty relations and the 
(in)distinguishably of particles \cite{PhysRevA.86.062101}, and 
 the unreasonable effectiveness of mathematics \cite{gelfert}. 
A comprehensive study, however, on the 
links between the following practically-motivated formalism, as couched 
in the existing literature on the philosophy of science as a whole, 
represents a task that extends well beyond the scope of the present 
study, and will be left for future investigations into this topic.

\vspace{-3mm}\subsection{A brief caveat}

Before presenting the formalism, it is important to lay down a caveat 
pertaining to the interpretation of the results and outcomes of  
a formal approach to metaphysics of this kind, which may escape a casual 
reading of this work, unless explicitly stated. 
Critical feedback from researchers in the fields of metaphysics and the 
philosophy of physics are rightly wary of the concern that outcomes 
and theorems generated by such a formalism might be merely artefacts 
of the formalism itself, and the question of whether this mathematical structure 
reflects \emph{reality} may not be accessible to this formalism. 
As the author of the formalism, this concern is always at the forefront 
of one's mind, and great care is therefore taken in stating explicitly 
exactly what inferences are being made about the world, using this formalism. 

Firstly, the underlying principles of the formalism were carefully 
chosen in such a way that the definitions one 
is free to make reflect key observed structural properties, such as, for example, the `labelling 
principle' described in the next section, which encapsulates the recursive nature of 
referencing non-abstractions encapsulated within a (necessarily) abstract framework - e.g. 
`what is meant by what is meant by this object' (see Section~\ref{subsect:univ} above). 
In other words, the principles are chosen in such a way as to reflect the  
intrinsic structural nature of \emph{any} attempt at constructing a formal denotative language 
for describing objects (both non-abstract and abstract), reflecting reality in the specific 
area about to be interrogated, and then choosing to focus 
of these key observations. These choices, therefore, represent a potentially inescapable starting point for the construction of any such formalism from scratch, the starting point 
having to come from outside a strict formalism of that type. 

Secondly, and more importantly, it must be pointed out that the main outcomes 
of this work make very little assumption about real-world objects that could not 
possibly be accommodated in the formalism, thus keeping the formalism as general and 
`free of assumptions' as possible, and that the main conclusions drawn are on aspects (it can be non-controversially stated) that can be encapsulated naturally in such a scheme - 
namely, the properties of abstract objects (of the constructed type) \emph{themselves}. 
This should be borne in mind at all times in assessing the applicability of the outcomes of this work 
to distantly-related methods or areas of investigation. 

To cite a motivational example, the main intention of the construction of the formalism was not 
to seek a grandiose or unifying solution to all metaphysical problems, but to represent a genuine 
attempt to elucidate what could be considered to be a (troublingly) hitherto imprecisely-investigated 
aspect of the interface between science (focusing on physics) and philosophy (focusing on metaphysics): 
`what is meant by \emph{Ansatz}', and the parts of this process that specifically pertain to the relationship 
between the abstract and the `real world'. The idea that such a process, which is so crucially involved  
in the method of scientific inquiry (and most evident in physics), can be written down and then 
interrogated, using established techniques that are accessible due to the fact that the formalism is constructed to 
reflect how an \emph{Ansatz} is handled in practice, means that the formalism is, in essence, 
a way of computing metaphysical theorems. One need not agree with the underlying assumptions 
chosen for this instance of the formalism to concede that such an automation could, in principle, be immensely powerful if employed 
 in a variety of other metaphysical contexts that are specifically amenable to this kind of inquiry. 
It is the identification of these other contexts, that the formalism makes no claim to be able to do at this stage.

\vspace{-3mm}\section{Formalism} 
 \label{sect:form}

In this section, a new formalism is outlined in order to capture the  
essential features of the philosophical problem at hand.  
A set of very general \emph{abstraction operators} are defined, such that  
they may act upon each other in composition. By introducing another  
special kind of operation, the projection $\PP$, general objects may be  
constructed such that they, at the outset, obey the basic  
principles expected to hold for objects and attributes used in  
a recognisable context, such as in language.  
To avoid semantic trouble, when one is free to assume or assign  
a property in a given context, the choice made is that which  
is most closely aligned with `what is commonly understood' by a term.  
 Note that other definitions are (unless logically non-viable) completely  
acceptable also - it is simply a choice of convenience to try to align  
the concepts chosen to be investigated with those of a language,  
such as a spoken or written language; in fact,  
it is judicious to do so, given that any philosophical problems one may  
wish to address are usually initially cast in such a language.

\vspace{-3mm}\subsection{Projective algebra and abstraction classes} 
\label{subsect:proj} 
 
Cantor's Theorem \cite{Cantor} prohibits a  
consistent scheme classifying the space of all such abstract entities, 
(as echoed by Schmidhuber \cite{Schmidhuber:1999gw}). 
The abstractions considered here are limited to a set, $W$, 
of `real-world objects',  
representing a set of a specific type of object with certain  
(very general) properties. Very little mandatory structure for the 
 objects, $w$, inhabiting $W$ is assumed, and they may be represented  
by a set, a group or other more specific mathematical objects. 
Thus, one may define $W$ in a consistent fashion using an appropriate 
axiomisation, such as that of Zermelo-Fraenkel-Choice (ZFC) theory \cite{Zermelo},    
so long as none of the properties of the formalism is contravened. 

It’s important that the results of this work not rely on this particular construction, 
and represent general insights applicable regardless on which construction framework 
may be used. 
In this case, the salient feature relied on is that it is a mathematically consistent 
set theory that can allow a Russell paradox \cite{Russell}.
Other constructions that preclude a Russell paradox are also admissible; 
they simply represent a limiting case. 
As discussed later in Sec~\ref{subsec:den}, using New Foundations (NF) \cite{Quine} as a framework 
defines a smaller set of abstractions (with different cardinalities) 
and can allow a universal set to be consistently defined, 
and a set of abstractions harmonised neatly with a physical universe. 
In this case, while no external set to represent ‘left out’ abstractions in the universe 
is required, only one counter example is needed to generate this external set - the 
additional objects included in ZFC. 
In realism, the objects in a ZFC framework also need to be accommodated as real mathematical 
objects - and at the point of inclusion, a Russell Paradox results. 

In Section~\ref{subsubsect:gen}, the properties of the real-world  
objects $w$ are clarified. Note that, since $w$ must be handled in 
an abstract formalism, $w$ should not be thought of as a non-abstract object. 
It will be discovered that the treatment of $w$ must be the same as a `generalised object', 
described in Section~\ref{subsubsect:gen}, which automatically includes the 
most general recursive description of an object that receives interpretations from 
\textit{Ans\"{a}tze}. 
Some basic rules of composition are assumed, but the spaces mapped-into  
by doing so are simply definitions rather than theorems; the tone  
of this work is not to impose any more specific details on the framework  
than is required in order to fulfill the aforementioned goals, namely,  
the construction of a mathematical-like theory in order to address  
the mechanism of \textit{Ans\"{a}tze}. 
Other mathematical formulations for obtaining 
general information about a system, 
such as Deutsch's \emph{Constructor Theory} \cite{Deutsch:2012}, 
take a similar approach 
in determining suitable definitions for objects required for certain 
tasks in an inquiry.

\vspace{-3mm}\subsubsection{The labelling principle} 
\label{subsubsect:lp} 
Firstly, the projection operator, $\PP$, obeys what shall be known as  
the \textit{labelling principle}, which encodes the `nesting' feature 
inherent in referring to objects, identified in Section~\ref{subsect:univ}:  
\begin{equation} 
\label{eq:lp} 
\boxed{\PP\circ\PP = \PP.}
\end{equation} 
This encapsulates the aforementioned recursive nature of 
referencing non-abstractions such as `this object' being equivalent to `what is meant by this object' being equivalent to 
`what is meant by what is meant by this object', etc.
While a self-consistent mathematical framework may be like a hall of mirrors, 
this can simply be handled by encoding this property into the formalism.
 Since the projection operator non-trivially relates an abstract object to an object 
in the `real world' (and thus $\PP$ cannot be an identity operator), 
a direct consequence of the labelling principle is that  
$\PP$ has no inverse, $\PP^{-1}$.   
\begin{proof}
Assume $\PP^{-1}$ exists. Then:  
\begin{align*} 
\PP\circ\PP^{-1} &= \PP \\ 
&  1. \mbox{\quad Therefore, $\PP^{-1}$ DNE.}  
\end{align*} 
\end{proof} 
An equivalent argument follows for an operator $\PP^{-1}$ acting  
on the left of $\PP$. 

The projection operator may be applied to a real-world object $w$, 
and the resultant form, $\PP(w)$, constitutes a new object, inhabiting 
a different space from that of $w$. Firstly, the consequences of 
the lack of inverse of $\PP$ directly affect the projected space $\PP W$, 
which will be interpreted philosophically in the next section. 
Suffice to say, the judicious design of $\PP W$ lends itself to a 
particular  
view of abstractions, whereby very little information can be gained from 
an object in the real world \emph{directly}, as expostulated regarding 
the definition of evidence, in Section \ref{subsect:evintro}. 
 
\vspace{-3mm}\subsubsection{Abstractions} 

The notion of `abstraction' is codified by postulating a certain operator  
$\A$, 
which may act on objects residing in a space $W$, much like the projection 
operator. It will have, however, different properties to those of the 
projection operator $\PP$. 
Using the abstraction operator, one is able to go `up a level',  
($\A\circ\A(w)\neq\A(w)$), and establish new features of 
the object $w$.  
The sequential application of the abstraction operator creates a chain,   
 in a fashion analogous to cohomologies \cite{Bott}; 
however, the properties of the abstraction operators 
are more general than those of a functor. 

One may define the abstraction classes, $\Om^i$, as: 
\begin{align} 
1 &\in\Om^0,\\ 
\A &\in\Om^1,\\ 
\A\circ\A &\in \Om^2,\\ 
&\vdots \nonumber
\end{align} 
where $1$ is the identity operator; that is, where no information is added 
by an abstraction operator - the `trivial' abstraction. 
For the moment, the properties of the classes are no more extensive  
than, say, a collection of elements (the operators).  
The \emph{range} of $\A$, namely $\Om^1 \equiv \{\A_i\}$, is a class  
of any type of $\A$. (What is meant by the set symbols $\{$ $\}$ in the context 
of the new framework 
will be discussed in Section \ref{subsubsect:coll}.) 
It follows that, for $w\in W$, $w \in \Om^0(w) \in \Om^0(W)$.  

The sequential actions of the projection ($\PP$) and the abstraction ($\A$) 
operators do not  
cancel each other, and it can be shown that $\A\circ\PP(w)\neq w$: 
\begin{thm} 
\label{thm:cochains} 
$\A\circ\PP(w) \neq w$. 
\end{thm}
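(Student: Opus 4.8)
The plan is to argue by contradiction, leaning on the two facts already in hand: the labelling principle $\PP\circ\PP=\PP$ and its immediate consequence, the non-existence of $\PP^{-1}$. I would suppose the conclusion fails, i.e. that $\A\circ\PP(w)=w$, and read this as the assertion that $\A$ supplies a left inverse for $\PP$ on the object in question. The strategy is then to feed this putative inverse back into the idempotency relation and show that it collapses $\PP$ onto the identity, which is precisely what the earlier result forbids.

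Concretely, I would apply $\A$ on the left to both sides of the labelling principle, giving $\A\circ\PP\circ\PP=\A\circ\PP$. The right-hand side is $\A\circ\PP$, which the contradiction hypothesis identifies with $1$. The left-hand side regroups by associativity as $(\A\circ\PP)\circ\PP$, and substituting $\A\circ\PP=1$ leaves just $\PP$. Equating the two sides yields $\PP=1$. But the identity is trivially its own inverse, so $\PP=1$ would endow $\PP$ with an inverse, contradicting the consequence of the labelling principle established immediately after equation~(\ref{eq:lp}). Hence the supposition is untenable and $\A\circ\PP(w)\neq w$.

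The step I expect to be the real obstacle is not the algebra but the bookkeeping of spaces, and this is where care is needed in promoting the sketch to a proof. The hypothesis $\A\circ\PP(w)=w$ is an equation between an object of $W$ (the right-hand side $w$) and an object living in the abstraction $\A(\PP W)$, a space raised one level above $\PP W$; and the property $\A\circ\A(w)\neq\A(w)$ is exactly the guarantee that $\A$ genuinely ascends a level rather than stabilising. So the deeper content of the theorem is that $w$ and $\A\circ\PP(w)$ inhabit different abstraction classes $\Om^i$ and therefore cannot coincide, the algebraic contradiction above being the formal shadow of this fact, valid once one grants that the equation type-checks at all. A fully rigorous version would first fix the convention under which $\A\circ\PP$ and $1$ are comparable as operators (or, equivalently, treat the relation pointwise on the image $\PP W$, using $\PP\circ\PP=\PP$ to re-enter that image before applying $\A$), and only then run the collapse to $\PP=1$.
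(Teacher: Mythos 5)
Your proposal is correct and follows essentially the same route as the paper: both promote the pointwise hypothesis $\A\circ\PP(w)=w$ to the operator identity $\A\circ\PP=1$ (the paper via $\A\circ\PP\in\Om^0$, you via an explicit type-checking convention) and then contradict the non-existence of $\PP^{-1}$ established after Eq.~(\ref{eq:lp}). Your detour through $\PP=1$ merely unpacks the ``equivalent argument on the left'' that the paper invokes when it concludes $\A=\PP^{-1}$ DNE, so there is no substantive difference.
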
 
\begin{proof}
Assume $\A\circ\PP(w) = w$. Then: 
\begin{align*} 
&\A\circ \PP\in\Omega^0 \\ 
&\Rightarrow \A\circ\PP = 1 \\ 
&\Rightarrow \A =\PP^{-1}, \,\ro{DNE.} 
\end{align*} 
\end{proof}\vspace{-3mm} 
Thus, $\A\circ\PP(w)\in\Om^1(\PP(w))$, for some $w\in\Om^0(w)$.  

Now, consider the complex of maps: 
\begin{align*} 
&\Om^0\rightarrow\Om^1\rightarrow\Om^2\cdots\\ 
&\downarrow\searrow^\chi\,\,\,\downarrow\qquad\downarrow\\ 
&\PP\,\Om^0\,\,\,\PP\,\Om^1\,\,\,\PP\,\Om^2\cdots 
\end{align*}
\begingroup\vspace*{-\baselineskip}
\captionof{figure}{The sequence of mappings from ranges associated with different abstraction levels, $\Om^i$, and their associated projections onto the real world.}
\vspace*{\baselineskip}
\endgroup
It is possible to design a function, $\chi\equiv\PP\circ\A$, which  
exists, and will be utilised in Section \ref{subsubsect:gen}.  
However, it is important to note that our constraints on $\PP$ do  
not allow the construction of a function   
$\varphi:\PP\,\Om^0\rightarrow\PP\,\Om^1$,  
or any other mapping between projections of abstraction classes.  
Mathematically, $\varphi$ would have the form $\PP\circ\A\circ\PP^{-1}$, which  
does not exist; but philosophically, it is supposed of  $\PP(w)$ that it 
should    
encode the behaviour of the actual real-world object \emph{meant} by $w$. 
One interprets the `non-mathematical' object, $\PP(w)\in\PP W$,  
knowing that the fact that it is necessarily an abstraction is already  
encoded in the behaviour of $\PP$ by construction (through 
the labelling principle).  
Note that the failure to construct a function $\varphi$ as a composite 
of abstraction operators and their inverses does not mean that such 
a mapping does not exist. However, the principal motivation for supposing 
the non-existence of such a function is to encode the features one expects 
in an abstract modelling of non-abstract objects. 

As mentioned in Section \ref{sec:mr}, 
this view of the general structure of abstraction is  
an opposite view to the metaphysic of Epiphenomenalism \cite{Jackson,Huxley}, 
in that, colloquially  
speaking, changes to `real-world' objects can only occur via some  
abstract state, and it does not make sense to set up a relationship  
between non-mathematical entities and insist that such a relationship  
must be non-abstract.  
 
Different instances of $w$ cannot be combined in general, but their   
abstractions can be compared by composition. 
The objects  
$\A(w_1)$ and $\A(w_2)$ can also  
be defined to be comparable, via use of the \emph{commutators}, in  
Section \ref{subsubsect:comm}.

In considering the properties of $\Om^0(W)$, one finds that:  
\begin{align} 
\Om^0(\Om^0(W)) &= \Om^0(W) \\ 
\Rightarrow \Om^0\circ\Om^0 &= \Om^0. 
\end{align} 
Generalising to higher abstraction classes, we find the \emph{level addition 
 property}: 
\begin{equation} 
\label{eq:la} 
\boxed{\Om^i\circ\Om^j = \Om^{i+j}.}
\end{equation} 
The non-uniqueness of $\A$ means that many abstract objects can  
describe an element of $W$. 
In general,  
$\A_i\circ\A_j(w)\neq \A_j\circ\A_i(w)$,  
so $\Om^1(\A_i(w))\neq\A_i(\Om^1(w))$, though both $\Om^1(\A_i(w))$  
and $\A_i(\Om^1(w))$ are in $\Om^2(w)$.  
The set $\Om^0$ includes the identity operator $1$, but also contains 
elements constructed from abstractions and other inverses, e.g. 
$\A_{L,i}^{-1}\circ\A_j(w)$, to be discussed in Section \ref{subsubsect:li}.

\vspace{-3mm}\subsubsection{Commutators} 
\label{subsubsect:comm} 

Define the commutator as an operator  
that takes the elements of the $i\,$th order abstraction space, acting  
on a real-world object $w_1$, to the same abstraction space acting on another  
real-world object $w_2$, $\Com^i_{W=\Om^0(W),1,2}: \Om^i(w_1)\rightarrow\Om^i(w_2)$.  
In this notation, 
the subscripts on the commutator symbol indicate the space inhabited 
by the objects whose abstractions are to be commuted, 
and the labels of the discarded 
and added objects, respectively. The superscript denotes the 
order of abstraction plus one, at which the commutation takes place. 
As a simple example, $\Com^1_{W,1,2}\A(w_1) = \A(w_2)$.  

In general, let: 
\begin{align} 
\Com^1_{W,i,j}\A(w_i) &= \A(w_j),\\ 
\Com^2_{\Om^1(w),i,j}\A_k\circ\A_i(w) &= \A_k\circ\A_j(w),\\ 
\mbox{and}\,\Com^{b+1}_{\Om^b(w),i,j} 
\underbrace{\A\circ\cdots\circ\A^{b}_i\circ\cdots\circ\A}_a(w) &=  
\A\circ\cdots\circ\A^{b}_j\circ\cdots\circ\A(w). 
\end{align} 
In order to construct the new object from the old object, 
one must successively apply `inverse' operations of the 
relevant abstractions to the left of the old object (as discussed 
in the next section), and 
rebuild the new object by re-applying the abstractions. 
This is not possible in general, where objects may include 
operators that have no inverse, such as the projection operator.

\vspace{-3mm}\subsubsection{The left inverse of the abstraction} 
\label{subsubsect:li} 
Define the left inverse: 
\begin{equation} 
\A_L^{-1}\circ\A = 1, 
\end{equation}  
or more generally, $\A_L^{-1}\circ\A(w) = w$.  
A right inverse is not assumed to exist in general, 
which will be important in establish certain kinds of properties in  
Section \ref{subsect:ext}.  

As a generalization, one can define a chain of negatively indexed  
abstraction classes $\Om^{-|i|}$. The level addition property can accommodate 
this scenario. The elements of $\Om^0$ are populated by objects of the 
form $\A_{L,i}^{-1}\circ \A_j$, \, or 
$\A_i\circ \A_{L,j}^{-1}\circ \A_k \circ \A_{L,n}^{-1}$,\, etc.   
That 
is, successive abstractions and inverses in any combination such that the 
resulting abstraction space is order zero. This includes the identity operator. 

By using the left inverse, it can be shown that the following theorem holds 
(which complements Theorem \ref{thm:cochains}), 
as a consequence of the  choice of the philosophical properties of $\PP$: 
\begin{thm} 
\label{thm:chains} 
$\PP\circ\A(w)\neq w$.  
\end{thm}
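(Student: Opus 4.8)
The plan is to argue by contradiction, in direct parallel with the proof of Theorem \ref{thm:cochains}, and to convert the pointwise statement into an identity of operators that can be manipulated with the labelling principle \eqref{eq:lp} and the left inverse $\A_L^{-1}$. First I would suppose, for contradiction, that $\PP\circ\A(w) = w$. Since the composite operator $\PP\circ\A$ then returns every $w$ unchanged, it sends $W$ to $W$ with no net change of abstraction level, so it must coincide with the identity, $\PP\circ\A = 1 \in \Om^0$. This is the analogue of the first line of the proof of Theorem \ref{thm:cochains}, and it is the reduction on which everything else rests.

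With the operator identity $\PP\circ\A = 1$ in hand, the cleanest route I would take is to compose on the left with $\PP$ and invoke the labelling principle: $\PP\circ(\PP\circ\A) = (\PP\circ\PP)\circ\A = \PP\circ\A$, while the left-hand side also equals $\PP\circ 1 = \PP$. Combining these with $\PP\circ\A = 1$ gives $\PP = \PP\circ\A = 1$, so that $\PP$ would equal the identity. But the identity is its own two-sided inverse, whereas it was already shown that $\PP$ admits no inverse; this is the required contradiction. As the remark preceding the theorem suggests, the same conclusion can be reached through the left inverse directly: from $\PP\circ\A = 1 = \A_L^{-1}\circ\A$ one reads off that $\PP$ would act as a left inverse of $\A$ and hence belong to the negatively indexed class $\Om^{-1}$, so that by the level addition property \eqref{eq:la} one would have $\PP\circ\PP\in\Om^{-2}$, in conflict with the idempotence $\PP\circ\PP=\PP$ forced by \eqref{eq:lp}.

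I expect the main obstacle to be justifying the two inferential jumps that the formalism treats only informally: first, the passage from the pointwise equality $\PP\circ\A(w) = w$ to the operator equality $\PP\circ\A = 1$, which presumes that an operator fixing every element of $W$ is the identity on the relevant space and that no information is lost in suppressing the argument $w$; and second, the manipulation of one-sided inverses, since left inverses need not be unique and $\A$ is not assumed to possess a right inverse. The labelling-principle argument is robust to these concerns because it never cancels $\A$ on the right and uses only $\PP\circ\PP=\PP$ together with the previously established non-existence of $\PP^{-1}$; I would therefore present that version as the primary proof and relegate the left-inverse calculation to a confirming remark, taking care that the level bookkeeping in $\Om^{-1}$ and $\Om^{-2}$ is consistent with the definitions of Section \ref{subsubsect:li}.
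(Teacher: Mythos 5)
Your proposal is correct by the paper's standards of rigor, and it in fact contains the paper's own proof as your ``confirming remark'': the paper's argument is precisely to posit $\Om^{-2}\neq\Om^{-1}$, identify $\PP$ with $\A_L^{-1}$, and then use the labelling principle $\PP\circ\PP=\PP$ to force $\A_L^{-1}\circ\A_L^{-1}=\A_L^{-1}$, which contradicts the distinctness of the negatively indexed classes under level addition. Your \emph{primary} route, however, is genuinely different: from the operator identity $\PP\circ\A=1$ you left-compose with $\PP$, use only $\PP\circ\PP=\PP$ to get $\PP=\PP\circ\A=1$, and then contradict the previously established non-existence of $\PP^{-1}$ (the identity being its own inverse). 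What your route buys is economy of hypotheses: it dispenses entirely with the negatively indexed classes and with the auxiliary assumption $\Om^{-2}\neq\Om^{-1}$, which the paper must insert by hand at the start of its proof, relying instead only on Eq.~(\ref{eq:lp}) and a result already proved in Section \ref{subsubsect:lp}. What the paper's route buys is that it exhibits the contradiction at the level of the abstraction-class bookkeeping itself, which is the structure the surrounding section (and the corollary in Eq.~(\ref{cor:chains})) is designed to exercise. Note that both arguments share the same informal jump you correctly flag: the passage from the pointwise $\PP\circ\A(w)=w$ to an operator identity (for you $\PP\circ\A=1$, for the paper $\PP=\A_L^{-1}$); the paper makes the identical move in Theorem \ref{thm:cochains}, so neither version is more exposed on that score, and your explicit acknowledgement of it is a point in your write-up's favour.
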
 
\begin{proof}  
Assume $\Om^{-2}\neq\Om^{-1}$, and $\PP=\A_L^{-1}$. Then: 
\begin{align*} 
\PP\circ\PP(w) &= \A_L^{-1}\circ\A_L^{-1}(w)\\ 
&= \PP(w) \quad (\mbox{Eq.~(\ref{eq:lp})})\\ 
&= \A_L^{-1}(w)\\ 
\Rightarrow \A_L^{-1}\circ\A_L^{-1} &= \A_L^{-1} 
 \,\Rightarrow\Leftarrow 
\end{align*} 
\end{proof} \vspace{-3mm}
As a corollary, it also follows that: 
\begin{equation} 
\label{cor:chains} 
\PP\circ\A(w) \neq \A\circ\PP(w).  
\end{equation} 

In summary, this non-commutativity property of the abstraction operators in  
Eq.~(\ref{cor:chains}) is  
an important consequence of the reverse-Epiphenomenal  
philosophical motivation behind the labelling principle,  
and it will be the starting point for the construction of the generalised  
objects in Section \ref{subsubsect:gen}. 
 
\vspace{-3mm}\subsubsection{Auxiliary maps}

In order for a successful description of the relationships among different 
objects in general, 
 a definition of the mapping between objects of the form $\A\PP(w_1)$ and 
$\A\PP(w_2)$ is sought. 

Up until now, maps of the following types have been considered:\\
$\bullet\quad$ $w\rightarrow \A(w)$, where 
the notion of the map itself ($\rightarrow$) is an abstraction of $\A$ of the form $\A_{\ro{map}}\circ\A\in\Om^2(w)$; \\
$\bullet\quad$ $\A\circ\A(w)\rightarrow w$, where the map is now of the form 
$\A_{\ro{map}}\circ\A\circ\A\in\Om^3(w)$, and \\
$\bullet\quad$ $\A(w)\rightarrow w$, similar to the first case, except that 
the direction is reversed. 

By convention, let the definition of 
map be chosen such that the direction of the 
mapping is not important, but simply 
the relationship between the two objects. Therefore, the map is always 
taken such that it exists in a space $\Om^{i\geq 1}$, that is, we 
define it as the modulus of 
the map. Note that each of these maps is constructed as a composite of other 
abstractions, and shall be denoted \emph{literal maps}.

It cannot, in general, be attested that there is never to be a mapping 
between some $w_1$ and $w_2$ in $W$. However, one is free to \emph{define} 
to exist a non-composite type of map, denoted \emph{auxiliary maps}, 
which relate objects of the form $\A\PP(w_1)$ and $\A\PP(w_2)$, 
or $\A\PP\A_1(w)$ and $\A\PP\A_2(w)$, etc. The map itself exists 
in the space $\Om^1(w)$, that is, the application of what is meant by 
the map does not change the order of the objects to which it is applied. 

Using the concept of auxiliary maps, a generalised version of the 
commutator,
denoted $\hat{\Com}$, may be defined in a way not possible for the 
na\"{i}ve construction of the commutator: 
\begin{equation}
\hat{\Com}^2_{\Om^1(w),i,j}\A\circ\,\PP\circ\A_i(w) = \A\circ\,\PP\circ\A_j(w).
\end{equation}
The \emph{auxiliary commutator} will be important for the neat formulation of 
 the general condition for a specific kind of existence, 
in Sec~\ref{subsect:ext}. 

\vspace{-3mm}\subsubsection{The total set} 
One may define the \emph{total set} of an object $w$ as: 
\begin{align} 
S(w) &= \bigcup_{i=0}^\infty\Om^i(w),\\
S(\PP(w)) &= \bigcup_{i=0}^\infty\Om^i\circ\PP(w).
\end{align} 
The $\bigcup$ symbol denotes the range of the multiple objects, indexed 
by an integer e.g. $i$,    
over which the set should be specified. 
Here, one postulates a certain \emph{supposition of physics}, that  
$\PP S(w)$ spans at least $W$: 
\begin{supp} 
$\PP$ is surjective, i.e. $\forall w_i \in W, \exists \sigma_i\in S(w_i)$  
such that $\PP(\sigma_i) = w_i$.  
\end{supp} 
This condition represents the `working ethos' of the practice 
of physics. It is expected that there is some abstract description, 
however elaborate or verbose, to describe every real-world object.

\vspace{-3mm}\subsection{Relationships} 
\label{subsect:rel} 
 
In this section, the tools introduced in the preceding section  
will be used to define more general relationships between objects.  
In addition, a generalised object notation will be defined, and the  
nature of the  real-world objects $w$ will be  
clarified.  

\vspace{-3mm}\subsubsection{\textit{Ans\"{a}tze}} 
An \textit{Ansatz} is formed by adding a structure, or additional layer of 
abstraction, 
and imposing it on what one considers `the real world'.  
Cast in the new framework, this   
is simply the successive application of an abstraction and a projection  
operator upon some object: 
\begin{equation} 
\Z_i(w) = \PP\circ\A_i(w). 
\end{equation}  
As a consequence of the labelling principle of Eq.~(\ref{eq:lp}),  
the \textit{Ansatz} $\Z$ of an object $w$, and what is meant by the real-world 
object 
corresponding to $w$,  
cannot be related using a literal map, recalling that  
$\varphi:\PP\circ\A(w)\rightarrow\PP(w)$ does not exist in general.  
This simply means that there is no formulated 
procedure for generating the label of an  
object directly from the object itself - it is a free choice.  
The \textit{Ansatz} is akin to the statement: 
 `suppose this new label (with possible additional information) to  
be linked to the real object'. 
The notion of \textit{Ansatz}, particularly the special 
examples considered in Sec.~\ref{subsect:ext}, will be useful in understanding 
the formal structure of existence as it pertains to real-world objects.

\vspace{-3mm}\subsubsection{Collections and relationships} 
\label{subsubsect:coll} 
A \emph{collection} or \emph{set} of objects $\{w_i\}$ 
(indexed by integer $i$), 
in the formalism, 
is simply treated as 
an abstraction, $\A_\ro{set}$, used in conjunction with a commutator: 
\begin{equation} 
\{w_1,\ldots,w_N\} = \bigcup_{j=1}^N\A_\ro{set}\circ\Com^1_{W,i,j}w_i. 
\end{equation} 
By further imposing that there should be a relationship (other than the  
collection itself) among the objects $w_i$, the additional information  
is simply added by another abstraction, say, $\A'$,  
and what is meant be this particular relationship is simply: 
\begin{equation} 
r(w_i) = \PP\circ\,\A'\circ\bigcup_{j=1}^N\A_\ro{set}\circ\Com^1_{W,i,j}w_i. 
\end{equation} 
A relationship in general,  which is of the form 
$r = \A'\circ\Com^1\A\in\Om^2$, does not have  
to specify that there be a particular relationship among objects $w_i$.  

\emph{Example:} 
In identifying a `type' of object, such as all objects that satisfy  
a particular function or requisite, one means something slightly more  
abstract than a particular instance of an object itself.  
In order to produce a notion similar to the examples: 
`all chairs' or `all electrons',  
one must construct a relationship among a set of $w_i$'s, each of which  
is a set of, say, $n$ observations: $w_A,w_B,\ldots\in W'\subset W$. 

Let: 
\begin{align} 
w_A &= \bigcup_{j=1}^{N_A}\A_\ro{set}\circ\Com^1_{W,i,j}w_i^{(A)}, \\ 
w_B &= \bigcup_{j=1}^{N_B}\A_\ro{set}\circ\Com^1_{W,i,j}w_i^{(B)}, \quad\mbox{etc.} \\ 
w_i^{(A)},w_i^{(B)},\ldots&\in W. \\ 
\mbox{Then:}\,\,\{w_A,w_B,\ldots\} &= \bigcup_{j'=1}^{n}\bigcup_{j=1}^{N_{j'}}
\A_\ro{set}\circ\Com^1_{W',i',j'}\A_\ro{set}
\circ \Com^1_{W,i,j}w_i^{(i')}. 
\end{align} 
The relationship itself that constitutes the `type' thus takes the form: 
\begin{equation} 
\label{eq:type}
\boxed{r_\ro{type} = \A'\circ\bigcup_{j'=1}^{n}\bigcup_{j=1}^{N_{j'}}
\A_\ro{set}\circ\Com^1_{W',i',j'}\A_\ro{set}
\circ \Com^1_{W,i,j}w_i^{(i')} \in \Om^3(W)\,\, (= \Om^2(W')),}
\end{equation} 
for some $\A'$, and $W'=\Om^1(W)$\footnote{The form, $W'=\Om^1(W)$, 
makes sense, in that the notion of `being a subset' is a single-level 
abstraction, residing in $\Om^1$.}. 
This formula represents the notion 
of `types' of object in a fairly general fashion, in order 
to resemble as closely as possible the way in which objects are typically 
characterised and subsequently handled in the frameworks of 
language and thought.

\vspace{-3mm}\subsubsection{Generalised objects} 
\label{subsubsect:gen} 
 
Up until now, discussion of the nature of the real-world objects, 
$\{w_i\}\in W$,  
has been avoided. However, in order to incorporate them in the most  
general way into the framework of the abstraction algebra,  
one may posit that the real-world objects are simply a chain of  
successive projection or abstraction operators.  
In general, one can construct `sandwiches', such as: 
\begin{equation} 
\A_1\circ\cdots\circ\A_{i_1}\circ\PP\circ\A_2\circ\cdots. 
\end{equation} 
All objects considered in the framework thus far can be expressed in this  
form, noting that $\PP$(anything) $\in\Om^0\PP$(anything).  
Due to the corollary in Eq.~(\ref{cor:chains}),  
the projection operators cannot be `swapped'  
with any of the abstraction operators, and so the structure of the  
object is nontrivial.  
Let $c$ denote a generalised object, living in the space: 
\begin{equation} 
\label{eq:c} 
\boxed{c \in \Om^{i_1}\PP\,\Om^{i_2}\PP\cdots\PP\,\Om^{i_n}(W) \equiv \C^{i_1 i_2\ldots i_n}_W.}
\end{equation} 
The space $W$ here could stand for any other general space constructed in  
this manner, not necessarily the space inhabited by $c$ itself;  
thus Eq.~(\ref{eq:c}) is not necessarily recursive as it may initially appear.  
Because the `internal structure', so-to-speak, of $c$ contains a collection  
of a possible many abstractions, it may be expressed in terms of type. 
Here are two examples: 
\begin{align} 
\mbox{Let:}\,\,\,c^{(1)} &= \PP\{\PP\circ\A_1(w),\ldots,\PP\circ\A_n(w)\}  
\in \Om^0\PP\,\Om^1\PP\,\Om^1(w) = \C^{011}_w\\ 
&= \PP\bigcup_{j=1}^n\A_\ro{set}\circ\Com^2_{\Om^1,i,j}\PP\A_i(w).\\ 
\mbox{Or:}\,\,\,c^{(2)} &= \PP\{\PP(w_1),\ldots,\PP(w_n)\}  
\in \Om^0\PP\,\Om^1\PP\,\Om^0(W) = \C^{010}_W\\ 
&= \PP\bigcup_{j=1}^n\A_\ro{set}\circ\Com^1_{W,i,j}\PP(w_i), 
\end{align} 
where, in the first case, $\Om^0(w)\in\Om^0(W)$. 
 
Consider the behaviour of an \textit{Ansatz} $\Z=\PP\circ\A_Z$ acting on a  
generalised object $c\in\C^{i_1\ldots i_n}_W$: 
\begin{equation} 
\Z\circ c = \PP\circ\A_Z\circ\A_1\circ\cdots\circ\A_{i_1}\circ\PP\circ\cdots. 
\end{equation} 
$\Z$ maps $c$ into a space $\C^{0i_1+1\ldots i_n}_W$. If we define  
rank($c$)$= n$, then rank($\Z\circ c$)$= n+1$. Note that the rank  
of $\Z\in \C_c^{01}$ can also be read off easily: rank($\Z$)$= 2$.  
Objects of the form of $\Z$ are the principal rank $2$ \textit{Ans\"{a}tze}.  
Note that other rank $2$ objects besides $\Z$ exist, such as objects of the  
form $\PP\circ\A_1\circ\A_2$.  
 
A more general description of \textit{Ans\"{a}tze} also exists, analogously to the  
generalised objects. By constructing an object of the form: 
$\chi = \underbrace{\A\circ\cdots\circ\A}_{j_1}
\,\circ\,\,\PP\circ\underbrace{\A\circ\cdots\circ\A}_{j_2}\circ\PP\circ\cdots$, 
that is, for an object residing in a space $\C_c^{j_1\,j_2\ldots j_m}$, the composition  
$\chi\circ c$ lies in $\C_W^{j_1\ldots (j_m+i_1)\ldots i_n}$,  
which is of rank $n+m-1$.  
 
By convention, an \textit{Ansatz} must contain a projection operator. Therefore,  
there is no rank $1$ \textit{Ansatz}, 
and we arrive at our general definition of  
\textit{Ansatz}: 
\begin{equation} 
\mbox{\textit{Any object acting on $c$, with a rank $> 1$, is an Ansatz}.} 
\end{equation} 

In addition, there are no objects with rank $\leq 0$.  
\begin{proof}  
Let $\xi$ exist such that rank($\xi$)$\leq 0$, and $c\in\C^{i_1\ldots i_n}_W$.  
Then:  
\begin{align*} 
\mbox{rank($\xi\circ c$)} &= \mbox{rank($\xi$)+rank($c$)}-1 
< \mbox{rank($c$)}=n\\ 
\Rightarrow \xi\circ c &\in \C^{i'_1\ldots i'_{n-1}}_W \\
\Rightarrow \xi &\mbox{\,\,is of the form\,\,} 
X\circ\bigcup_{j=1}^{k}\PP^{-1}\circ 
\bigcup_{i=1}^{i_j} \A_{i,L}^{-1}, \,\mbox{where rank($X$)\,$\leq k$}\\
\Rightarrow \xi &\mbox{\,\,DNE, for any $X$.} 
\end{align*} 
\end{proof} 
 For example, in the case $k=1$, $X$ is a rank $1$ \textit{Ansatz}, and 
$\xi = X\circ\PP^{-1}\circ\bigcup_{i=1}^{i_1}\A_{i,L}^{-1}$, 
such that $\xi\circ c \in \C^{i_2\ldots i_n}_W\cong\C^{i'_1\ldots i'_{n-1}}_W$. The rank of $\xi\circ c$ is  $n-1$, 
and the rank of $\xi$ is therefore $0$. Because of the usage of the operation 
$\PP^{-1}$, such an object is inadmissible. 

We would like to use \textit{Ans\"{a}tze} to investigate the properties of  
generalised objects. However, there are a variety of properties in particular,  
discussion of which shall occupy the next section. The notion of `existence'  
is a key example that urgently requires clarification, and it will be  
found that such a property (and those similar to it),  
when treated as an abstraction, must have additional constraints.

\vspace{-3mm}\subsection{$I$-extantness} 
\label{subsect:ext} 
Firstly, one must make a careful distinction between what is meant by  
`existence' in the sense of mathematical objects, and in the sense  
of the `real world'. In the former case, one may assume that an object  
exists if it can be defined in a logically consistent manner.  
In the latter case, it is a nontrivial property of an object, which must  
be investigated on a case-by-case basis, and  
the alternative word `extantness' will be used for this in order to  
avoid confusion. The goal of the formalism is to \emph{relate} the two terms -  
that an object's extantness can be tested by appealing to the existence  
(in the mathematical sense) of some construction. This is a crucial 
point to understand in the formalism: that the notion of extantness 
of an object in the real world, by construction, will depend in some 
fashion upon mathematical existence. 
 
We begin by assuming that extantness is an inferred property of an object,  
and thus added by an \textit{Ansatz}. Define its abstraction, $\A_E$, such that  
an object $c = \PP\circ\A_E(w)$ is extant if such a construction
 exists; i.e. $c$ is extant  
if it can be written in this form (for any $w$).  
For $c = \PP\circ\A_E\circ\A_1\circ\cdots\in\C_W^{i_1\ldots i_n}$,   
the operator $\A_E$ must occur in the left-most position of  
all the abstractions in $c$. 
Clearly then, it \emph{must  
not necessarily be the case} that $\PP\circ\A_E(w)$ exists, if  
this abstraction is to be equivalent to how extantness  
(or existence in the conventional sense) is understood.  
 Otherwise, we have not correctly assigned the meaning of $\A_E$ 
to represent extantness properly. 
 
\emph{Example:} 
Consider the object $\PP\circ\A_E(1)$, where $1$ is the abstraction  
identity. $\A_E(1)=\A_E$ is the extantness itself, and $\PP\circ\A_E$ is  
`what is meant' by extantness, which is itself extant. It is the  
trivial extant object.   
 
This leads us to the first property of $\A_E$: that its right inverse,  
$\A_{E,R}^{-1}$, does not exist, as anticipated in Section \ref{subsubsect:li}.  
\begin{proof}  
Consider $c=\PP\circ\A\circ\cdots(w)$ such that it is not extant.  
Assume $\A_{E,R}^{-1}$  
exists also. Then:  
\begin{align*} 
c &= \PP\circ\A_E\circ\A_{E,R}^{-1}\circ\A\circ\cdots(w) \\ 
&= \PP\circ\A_E(w'), \,\mbox{where}\,w'\equiv\A_{E,R}^{-1}(w)\\ 
\Rightarrow &\,\mbox{$c$ is extant.}\,\Rightarrow\Leftarrow  
\end{align*} 
\end{proof} 

It is not necessary at this stage to suppose that the left inverse of  
$\A_E$ does not exist either; however, if that were the case, then  
$\A_E$ would share a property with $\PP$, in lacking an inverse.  
The two are unlike, however, in that $\A_E\circ\A_E\neq \A_E$.  
In other words, if we define $\A_E$ to live in a restricted class 
$\tilde{\Om}^1\subset\Om^1$, 
indicating 
 the additional constraint of lacking an inverse, then the level addition  
property of Eq.~(\ref{eq:la}) means  
$\A_E\circ\A_E\in\tilde{\Om}^2\not\equiv\tilde{\Om}^1$.  

A further consequence of the non-existence of $\A_{E,L}^{-1}$ is that  
the statement $\A_E(a) = \A_E(b)$ does not mean that $a=b$. One may interpret  
this as the fact that two abstractions may simply be labels for the same  
extant object. Note that the definition of the literal commutator requires the 
existence of an inverse of each abstraction operator that occurs 
in sequence to the left of the object being commuted, though that is not the 
case for auxiliary commutators. 

Recalling the supposition of physics, that  
$\PP S(w)$ spans at least $W$, a further clarification may now be added: 
\begin{supp} 
\label{supp:sopcor}
All extants have \textit{Ans\"{a}tze}, but not all elements of $\PP(W)$ or $\PP S(w)$  
are extants.  
\end{supp}  
From the point of view of Mathematical Realism, one would argue that 
 projected quantities, $\PP\circ\cdots$, are those which are `real' 
(and not dependent on their extantness), 
since such a definition of `real' would 
then encompass a larger variety of objects, including mathematical objects, 
regardless of their 
particular realisation in our universe. Such a choice for the 
word `real' seems to align best with the philosophy of Mathematical 
Realism; although this is purely a semantic choice. 
Nevertheless, it is still important to have a mechanism 
in the formalism to determine the extantness of an object. 
 
Although extantness has been singled out as a key property, a similar  
argument may be made for the `truth' of a statement, whose abstraction  
can be denoted as $\A_T$. Like extantness, the object $\PP\circ\A_T(w)$  
may not exist for every $w$, and the trivially 
true object is $\PP\circ\A_T(1)$. 
Let us label all properties of this sort, `$I$-extantness', since their  
enumeration in terms of common words is not of interest here.  
For any $I$-extant abstraction $\A_I$, we call $\tilde{\C}^{i_1\ldots i_n}_W$  
the restricted class of generalised objects, $c_I$.  
 
A formula is now derived, which is able to distinguish between  
objects that are $I$-extant and those that are not, by virtue of their  
 mathematical existence. Consider the case that $\PP\circ\A_I(w_1)$ exists, but  
$\PP\circ\A_I(w_2)$ does not.  Since the objects are 
characterised by the operators that appear in the generalised 
construction of Eq.~(\ref{eq:c}), $w_1$ must therefore 
contain a special property, $\A_{I'}$,  
that is not present in $w_2$. Unlike $\A_I$, it is not required that $\A_{I'}$  
occur in a particular spot in the list of abstractions that comprise $w_1$.  
Nor is there a restriction in the construction of an inverse, which would 
prevent a commutator notation being employed. 

Let $w_1$ be represented by a collection of objects defined by:  
$w_1=\{\A_{I'}\circ\A\circ\cdots,\A\circ\A_{I'}\circ\cdots,\mbox{etc.}\}$.   
 That is, $w_1$ takes the form of a set of generalised objects, $c$, but
 for the replacement of an operator, $\A$, with $\A_{I'}$. 
It is important to note that the particular $\A_{I'}$ that distinguishes 
$w_1$ from a non-extant object, such as $w_2$, is specific to $w_1$. 
For an object $c$ to be extant, it would have to include an abstraction 
$\A_{I'}^c$, specific to $c$; otherwise, any object related to $c$ in 
any way would also be extant, which would not reflect the behaviour 
expected of extant objects in the universe.  
In commutator notation, 
one would need to write out a geometric composition of the form:
\begin{align} 
w_1&= \A_{\ro{set}}\bigcup_{m=0}^{i_1-1}\Com^{i_1-m+1}_{\Om^{i_1-m}(w),m+1,I'}H_1\circ\PP\circ
\bigcup_{m'=0}^{i_2-1}\Com^{i_2-m'+1}_{\Om^{i_2-m'}(w),m'+1,I'}H_2\circ\cdots,\\
&= \A_{\ro{set}}\circ H_1\circ \bigcup_{p=2}^n\PP\bigcup_{m=0}^{i_p-1}\Com^{i_p-m+1}_{\Om^{i_p-m}(w),m+1,I'}H_p,
\end{align}
for $c=H_1\circ\PP \circ H_2\circ\cdots$. 
A more elegant formula may be defined simply in terms of $c$ itself,    
without the need of introducing new symbols, $H_1, \ldots, H_n$. 
One can achieve this using auxiliary commutators: 
\begin{align}
w_1&=\Big\{\hat{\Com}^{(\sum_{j=1}^n i_j)+1}_{\C_W^{i_1\ldots i_n},\,1,\,I'}c, 
\hat{\Com}^{\sum_{j=1}^n i_j}_{\C_W^{i_1-1\ldots i_n},\,2,\,I'}c,  
\ldots,\hat{\Com}^{(\sum_{j=1}^{n-1} i_j)+1}_{\C_W^{1\ldots i_n},\,i_1,\,I'}c,\ldots,  
\hat{\Com}^{(\sum_{j=1}^{n-2} i_j)+1}_{\C_W^{1\ldots i_n},\,i_2,\,I'}c,\dots\Big\}\\ 
 &= \bigcup_{m=0}^{i_1-1}\A_{\ro{set}}\circ\hat{\Com}^{(\sum_{j=1}^n i_j)+1-m}_{\C_W^{i_1-m\ldots i_n},\,m+1,\,I'}c 
\cup \bigcup_{m'=0}^{i_2-1}\A_{\ro{set}}\circ\hat{\Com}^{(\sum_{j=1}^{n-1} i_j)+1-m'}_{\C_W^{i_2-m'\ldots i_n},\,m'+1,\,I'}c \cup \cdots\\ 
&= \bigcup_{p=1}^n\bigcup_{m=0}^{i_p-1}\A_{\ro{set}}\circ
\hat{\Com}^{(\sum_{j=1}^{n-p+1}i_j)+1-m}_{\C_W^{i_1\ldots i_p-m\ldots i_n},\,m+1,\,I'}c.  
\end{align} 
It follows then, that a generalised object that is $I$-extant takes the form: 
\begin{equation} 
\label{eq:cIcond}
\boxed{c_I = \PP\circ\A_I\circ \bigcup_{p=1}^n\bigcup_{m=0}^{i_p-1}\A_{\ro{set}}\circ  
\hat{\Com}^{(\sum_{j=1}^{n-p+1} i_j)+1-m}_{\C_W^{i_1\ldots i_p-m\ldots i_n},\,m+1,\,I'}c,} 
\end{equation} 
where $c_I$ is of the form $\PP\circ\A_I(w_I)$. 
This is a powerful formula, as it represents the condition for $I$-extantness
 for a generalised object, $c$. 
Note that it would be just as correct to define $c_I$ as an element 
of a set characterised by the right-hand side (i.e. using `$\in$' instead 
of `$=$'), but because the notion of a `set', $\A_{\ro{set}}$ is simply an 
element of $\Om^1$ and it can be incorporated into the general form of 
$\C^{i_1\ldots i_n}_W$. 

One might wonder how to relate the properties of a proof (i.e. verifying 
the truth of a statement) with the 
existence of an abstraction $\A_{T'}$. 
In an example, consider the object representing the existence 
of truth, $w_T$. The validity of the `excluded middle' \cite{Aristotle} in this 
situation means that the proof is very simple:

\begin{proof}  
\vspace{-4mm}
\begin{align*} 
w_T &\Rightarrow w_T \\
\neg w_T &\Rightarrow (\neg \neg w_T) = w_T. 
\end{align*} 
\end{proof} 
Since $w_T$ is the statement of truth itself, i.e. $w_T = \PP\circ\A_T$, 
the inconsistency of $\PP\circ \A_T(w_T)$ means the inconsistency of 
$\PP\circ\A_T$. Such a statement is not true, by construction. 
One can now identify the abstraction, $\A_{T'}^{w_T}$ as being $\A_T$ itself. 
Thus, this exercise demonstrates that the proof of  
a statement has consequences for the abstract form of the statement, 
allowing one to identify more specific properties. 
Note that this does not, at this stage, provide us with extra proof methods, 
since there is no procedure, as yet, for acquiring knowledge of the 
form of an object's relevant $\A_{T'}$ in advance. The content of the proof 
must rely on standard means.

\vspace{-3mm}\subsection{Cardinality}

In the derivation of the general condition for an object $c$ to be 
$I$-extant (Eq.~(\ref{eq:cIcond})), one arrives at a set of elements. 
In this notation, the set is not intended to specify all the possibilities 
that each abstraction operator, $\A$, can take. Rather, 
the set can be thought of 
as being 
`the set of alterations from a general $c$ that encompass the required 
condition'. 

If one seeks an absolute measure of the `size' of the object, in terms 
of the overall possibilities, one may define a type of cardinality, $|c|$, 
in terms of the total possible number of abstractions.  
 Recalling Cantor's Theorem \cite{Cantor}, 
there is no consistent description of such a universal class. However, 
since the formalism accommodates the imposition of
 restrictions on the kind of objects that can be represented, let the 
number of possibilities for $\A$ be assumed consistently definable, 
and denote as $L$. $L$ need not be finite, nor even countable, however, 
it can be used to obtain formulae for the cardinality of an object. 

Define the number of abstractions, $\A$, in $c\in\C_W^{i_1\ldots i_n}$ 
as $\bar{n} \equiv \sum_{j=1}^n i_j$. Thus one finds that:
\begin{align}
\label{eq:cardc}
|c| &= L^{\bar{n}},\,\mbox{and}\\
|c_I| &= \bar{n}\,L^{\bar{n}-1}. 
\label{eq:cardcI}
\end{align}
The latter formula is simply a consequence of there being $\bar{n}$ 
possibilities for restricting one abstraction operator to be $\A_{I'}$. 
If one enforces $N$ restrictions on the set of $\A$'s,  then it follows that:  
\begin{equation}
|c_N| = \left(\frac{1}{2}(\bar{n}^2 - N^2)+1\right)L^{\bar{n}-N}. 
\end{equation}
This formula will become relevant in the next section.

\vspace{-3mm}\section{Unreasonable effectiveness} 
 \vspace{-2mm}

The goal is to use the general framework, described in Section \ref{sect:form}, 
to encapsulate the essence of describing phenomena using a theory  
(in the sense used in physics). Thus, the issue of Wigner's  
`unreasonable effectiveness' of mathematics \cite{Wigner} to describe 
the universe may be addressed by transporting the problem to a metaphysical 
context. There, the tools from philosophy, such as logic and proof theory, 
can be directed at the questions that involve not so much the 
behaviour of the universe, as the behaviour of \emph{descriptions} of the 
universe (i.e. the behaviour of the physics itself). 
It is important to be able to transport 
certain features of 
physics into a context where an analysis may take place, and such 
a context is, by definition, metaphysics. 

The notion of `effectiveness' is that, given a consistent set of 
phenomena, $v_i\in V$, one can extend $V$ to include more phenomena such that 
 \textit{Ans\"{a}tze} able to explain the phenomena 
satisfactorily can still be found. 
In this general context, what is meant by an `explanation' 
will be taken to be a relationship among the phenomena, $v_i$, in the form 
of abstractions. The essence of the mystery of the effectiveness of mathematics 
is not whether 
one can always `draw a box' around an arbitrary collection of 
objects, or that laws and principles (of any kind) are obeyed, but the ability
to identify particular principle(s) such that phenomena 
($v_1,v_2,\ldots$) are consequences of them; and that via the principles, 
the whole of $V$ may be obtained, indicating a more full explanation of the 
phenomena.  
That is, the phenomena 
are extant because of the truth of the underlying 
principles, rather than being identified 
`by hand' (which would hold no predictive power in the scientific sense). 
Note that the set $V$ may, in fact, 
only include a subset of the possible phenomena to 
discover in the universe, and so would represent a subset of the set, $W$, 
as discussed in Section \ref{sect:form}. 

Let $v_1,v_2\ldots$ have descriptions $\A_{v_1},\A_{v_2},\ldots
\in V\subset W$, which are extant. 
Let there be some principle 
(or even collection of principles with complicated inter-dependencies)  
described by the general object, $c_{\ro{princ}}$, such that each element of $V$ 
may be enumerated. It is our goal to investigate under what conditions 
the following statement holds:
\vspace{-4mm}
\begin{align}
c_{\ro{princ}} \,\mbox{is true} &\Rightarrow v_1,\,v_2,\ldots\,\mbox{are extant},\\
\mbox{i.e.}\,\,\Big[c_{\ro{princ}}=\PP\circ\A_{T}(w_{\ro{princ}})\Big] &\Rightarrow 
\Big[\A_{v_i} = \PP\circ\A_E(\A_{y_i})\in V\Big]. 
\label{eq:unr}
\end{align}
If there is a principle that implies such a statement, we wish to identify 
it, and investigate whether or not it is true. 

The circumstances of the truth of Eq.~(\ref{eq:unr}) depends on how 
$w_{\ro{princ}}$ is related to the phenomena, $v_i$. $w_{\ro{princ}}$ itself 
represents principle(s) whose truth is not added by hand in \textit{Ansatz} form. This does 
not mean that it is not true, since the form of $w_{\ro{princ}}$ is as
 yet unspecified. The most general way of relating $w_{\ro{princ}}$ and all 
$v_i$'s 
is to use the method of substituting abstractions into the formula 
for a generalised object, such as that used to derive the general condition 
of $I$-extantness in Eq.~(\ref{eq:cIcond}). In the same way that 
the set of all possible locations of $\A_{I'}$ in $c$ was considered, here, 
all possible combinations of locations of abstractions describing $v_i$ 
in $c$ must be considered, such that each $\A_{v_i}$ occurs 
\textit{at least once}. 
This formula can be developed inductively.
 
Consider initially only two phenomena, $v_1$ and $v_2$, with corresponding 
abstract descriptions defined as $\A_{v_1}$ and $\A_{v_2}$. For a generalised 
object, $c\in\C^{i_1\ldots i_n}_W$, one finds: 
\begin{equation}
w_{\ro{princ}}^{(N=2)} = \bigcup_{p=1}^n\bigcup_{\substack{m'=0 \\ m'\neq m}}^{i_p-1}
\bigcup_{m=0}^{i_p-1}
\A_{\ro{set}}\circ\hat{\Com}^{(\sum_{j=1}^{n-p+1}i_j)+1-m'}_{\C^{i_1\ldots i_p-m'\ldots i_n}_W,m'+1,v_2}
\hat{\Com}^{(\sum_{j=1}^{n-p+1}i_j)+1-m}_{\C^{i_1\ldots i_p-m\ldots i_n}_W,m+1,v_1} c. 
\end{equation}
In the case of $N$ phenomena, it is assumed that $N \leq i_p$:  
the number of abstractions available in the general formula for 
$c$ may be defined to be arbitrary large 
to accommodate the number of phenomena. One may make use of the following 
formula:
\begin{equation}
\bigcup_{\substack{m^{(N-1)}=0 \\ m^{(N-1)}\neq \mbox{\tiny{any other $m$'s}}}}^{i_p-1}
\cdots\bigcup_{\substack{m^{(2)}=0 \\ m^{(2)}\neq m^{(1)}}}^{i_p-1}\bigcup_{m^{(1)}=0}^{i_p-1} 
\qquad = \qquad \bigcup_{k=0}^{N-1}\bigcup_{\quad m^{(k,i_p)}\in[0,i_p-1]\backslash\bigcup_{\mu=0}^k\{m^{(\mu)}\}}.
\end{equation}
Here, $[0,i_p-1]$ is the closed interval from $0$ to $i_p-1$ in the set 
of integers, and for brevity, we define $\{m^{(0)}\}$ 
as the empty set: $\emptyset$. The most general form of $w_{\ro{princ}}$ 
may now be written as:
\begin{equation}
\boxed{
w^{\ro{G}}_{\ro{princ}} = \bigcup_{p=1}^n\bigcup_{k=0}^{N-1}
\!\!\!\!\!\!
\bigcup_{\quad m\in[0,i_p-1]\backslash\bigcup_{\mu=0}^k\{m^{(\mu)}\}}
\!\!\!\!\!\!\!\!\!\!\!\!\!\!\!\!
\A_{\ro{set}}\circ\hat{\Com}^{(\sum_{j=1}^{n-p+1}i_j)+1-m}_{\C^{i_1\ldots i_p-m\ldots i_n}_W,m+1,v_{k+1}}
c.}
\label{eq:WPG}
\end{equation}
In order for $c_{\ro{princ}}$ to be true, $w_{\ro{princ}}$ must contain 
information about the objects $y_i$, such that $\A_{v_i}=\PP\circ\A_E(\A_{y_i})$. 
Therefore, we seek only those elements of Eq.~(\ref{eq:WPG}) such 
that the phenomena $v_i$ take this form. This is a more restrictive set, 
as each abstraction of $y_i$ 
must have applied to it $\A_E$ and $\PP$ successively. 
There are only $n-1$ such occurrences of $\PP$ 
in $c$, so in making this restriction, we are free to choose:
\begin{align}
\label{eq:r1}
\bullet\quad N&\leq n-1, \\
\bullet\quad N &\leq i_p.
\label{eq:r2}
\end{align} 
The form of the more restricted version of $w_{\ro{princ}}$ 
is thus:
\begin{equation}
\boxed{w^{\ro{R}}_{\ro{princ}} = \bigcup_{k=0}^{N-1}
\!\!\!\!\!\!
\bigcup_{\quad p\in[2,n]\backslash\bigcup_{\pi=0}^k\{p^{(\pi)}\}}
\!\!\!\!\!\!\!\!\!\!\!\!\!\!\!\!
\A_{\ro{set}}\circ\hat{\Com}^{(\sum_{j=1}^{n-p+2}i_j)+1}_{\C^{i_1\ldots i_p\ldots i_n}_W,1,E}
\hat{\Com}^{\sum_{j=1}^{n-p+2}i_j}_{\C^{i_1\ldots i_p-1\ldots i_n}_W,2,y_{k+1}}c,}
\label{eq:WPR}
\end{equation}
where $\{p^{(0)}\}=\emptyset$. 

If $c_{\ro{princ}}$ can be constructed consistently, i.e. if it exists,
then the form of $w_{\ro{princ}}$ must be restricted to include an 
abstraction, $\A_{T'}^{w_{\ro{princ}}}$, 
that ensures the existence of $c_{\ro{princ}}$.  
This uses the same argument as in deriving Eq.~(\ref{eq:cIcond}), 
with $c_T = \PP\circ\A_T(w_T)$, 
and involves the union of Eq.~(\ref{eq:WPR}) with the object $w_T$. 
Thus, the condition 
under which Eq.~(\ref{eq:unr}) is true can now be written.  
\begin{thm} 
\label{thm:thcond}
The condition under which the principles 
of a theory describe certain phenomena takes the form: 
\begin{equation}
w_{\ro{princ}} \subseteq w^{\ro{R},T'}_{\ro{princ}} = \bigcup_{k=0}^{N-1}
\bigcup_{\quad p\in[2,n]\backslash\bigcup_{\pi=0}^k\{p^{(\pi)}\}}
\A_{\ro{set}}\circ\hat{\Com}^{(\sum_{j=1}^{n-p+2}i_j)+1}_{\C^{i_1\ldots i_p\ldots i_n}_W,1,E}
\hat{\Com}^{\sum_{j=1}^{n-p+2}i_j}_{\C^{i_1\ldots i_p-1\ldots i_n}_W,2,y_{k+1}}c\,\cup\, w_T.
\label{eq:thcond}
\end{equation}
\end{thm}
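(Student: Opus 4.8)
The plan is to assemble the theorem from the three ingredients already developed in this section, in the order: enumerate, restrict, and enforce consistency. First I would start from the most general principle object $w^{\ro{G}}_{\ro{princ}}$ of Eq.~(\ref{eq:WPG}), which was built by the same inductive device used for $I$-extantness in Eq.~(\ref{eq:cIcond}): it ranges over every admissible placement of each phenomenological abstraction $\A_{v_{k+1}}$ inside the generalised object $c\in\C^{i_1\ldots i_n}_W$, with the inner union over $m\in[0,i_p-1]\backslash\bigcup_{\mu=0}^k\{m^{(\mu)}\}$ guaranteeing that the $N$ descriptions occupy distinct slots. This object is the raw ``explanatory'' content: every way in which $V$ could be recovered from a single principle.

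The restriction step is the heart of the argument. The target implication, Eq.~(\ref{eq:unr}), demands not merely that each $v_i$ appear, but that it appear in the extant factored form $\A_{v_i}=\PP\circ\A_E(\A_{y_i})$. I would impose this by insisting that each $\A_{y_{k+1}}$ sit immediately to the right of an $\A_E$, which in turn sits immediately to the right of a projection $\PP$; the two nested auxiliary commutators $\hat{\Com}^{\ldots}_{\ldots,1,E}\hat{\Com}^{\ldots}_{\ldots,2,y_{k+1}}$ in Eq.~(\ref{eq:WPR}) encode exactly this ``$\PP\circ\A_E\circ\A_y$'' sandwich. The key counting observation is that a generalised object of rank $n$ contains precisely $n-1$ internal projection operators, so at most $n-1$ such sandwiches can be formed disjointly; combined with the pre-existing slot bound $N\leq i_p$ that guarantees enough abstraction positions within a block, this yields the two constraints of Eqs.~(\ref{eq:r1})--(\ref{eq:r2}) and collapses the $p$-union from the $n$ abstraction blocks onto the $n-1$ projection sites $p\in[2,n]$. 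This reproduces $w^{\ro{R}}_{\ro{princ}}$.

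Finally I would enforce the truth of the principle itself. Since $c_{\ro{princ}}=\PP\circ\A_T(w_{\ro{princ}})$ is an $I$-extant object with $I=T$, the existence argument following Eq.~(\ref{eq:cIcond})---applied verbatim with the trivially true object $c_T=\PP\circ\A_T(w_T)$---shows that $w_{\ro{princ}}$ must itself carry a distinguished abstraction $\A_{T'}^{w_{\ro{princ}}}$ witnessing its consistency. As in the $I$-extantness derivation, adjoining such a witness is realised formally by taking the union of the restricted object with $w_T$, giving $w^{\ro{R},T'}_{\ro{princ}}$. The containment $w_{\ro{princ}}\subseteq w^{\ro{R},T'}_{\ro{princ}}$ (rather than equality) is then forced for the same reason noted after Eq.~(\ref{eq:cIcond}): the right-hand side characterises the \emph{class} of admissible principles, of which any particular consistent $w_{\ro{princ}}$ is a member.

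The main obstacle I anticipate is bookkeeping rather than conceptual: verifying that the superscripts on the nested auxiliary commutators in Eq.~(\ref{eq:WPR}) track the abstraction level correctly after one slot is consumed by $\A_E$ and a second by $\A_{y_{k+1}}$ (the drop from $(\sum_{j=1}^{n-p+2}i_j)+1$ to $\sum_{j=1}^{n-p+2}i_j$), and that excising already-used projection sites via $[2,n]\backslash\bigcup_{\pi=0}^k\{p^{(\pi)}\}$ genuinely enumerates each placement once and only once. Establishing this disjointness---and checking that the bound $N\leq n-1$ is not merely sufficient but is exactly the obstruction to placing every $\A_{y_i}$ behind its own $\PP\circ\A_E$---is where the care is required; the remaining assembly is a direct substitution into the previously established formulas.
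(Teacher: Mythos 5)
There is a genuine gap: what you have written out is the paper's \emph{pre-theorem} development, not its proof. Enumerating placements to obtain $w^{\ro{G}}_{\ro{princ}}$ (Eq.~(\ref{eq:WPG})), imposing the extant-form restriction with the counting bounds of Eqs.~(\ref{eq:r1})--(\ref{eq:r2}) to obtain $w^{\ro{R}}_{\ro{princ}}$ (Eq.~(\ref{eq:WPR})), and adjoining $w_T$ to encode the truth witness are all carried out by the paper \emph{before} Theorem~\ref{thm:thcond} is stated; they produce the right-hand side of Eq.~(\ref{eq:thcond}) but do not establish it as the condition for Eq.~(\ref{eq:unr}). The proof obligation the paper actually discharges is a consistency check in the opposite direction: one must show $w^{\ro{R},T'}_{\ro{princ}}\subseteq w^{\ro{G},T'}_{\ro{princ}}$, i.e.\ that the restricted object genuinely is a specialisation of the general form of $w_{\ro{princ}}$, since otherwise the stated condition would demand of $w_{\ro{princ}}$ a form it cannot take. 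After noting that the $T'$-inclusion enters both sides identically (so it suffices to show $w^{\ro{R}}_{\ro{princ}}\subseteq w^{\ro{G}}_{\ro{princ}}$), the paper's key technical move is the labelling principle, Eq.~(\ref{eq:lp}): since
\begin{equation*}
\PP\circ\A_E(\A_{y_i}) = \PP\circ\PP\circ\A_E(\A_{y_i}) = \PP\circ\A_{v_i},
\end{equation*}
each nested pair $\hat{\Com}^{\cdots}_{\cdots,1,E}\hat{\Com}^{\cdots}_{\cdots,2,y_{k+1}}$ in Eq.~(\ref{eq:WPR}) collapses to a single auxiliary commutator in $v_{k+1}$; the result is then matched against the $m=0$ slice of Eq.~(\ref{eq:WPG}), so that the two expressions differ only in the range of the iterator $p$, and the entire containment reduces to the set inclusion $[2,n]\backslash\bigcup_{\pi=0}^{k}\{p^{(\pi)}\}\subseteq[2,n]$ for all $k\in[0,N-1]$, which holds under the restriction $N\ll n$. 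None of these steps---the reduction to $w^{\ro{R}}_{\ro{princ}}\subseteq w^{\ro{G}}_{\ro{princ}}$, the labelling-principle rewriting, or the iterator-range comparison---appears in your proposal.

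Moreover, your closing claim that the containment $w_{\ro{princ}}\subseteq w^{\ro{R},T'}_{\ro{princ}}$ is ``forced'' because the right-hand side ``characterises the class of admissible principles'' assumes exactly what the theorem asserts; it is a restatement of the conclusion, not an argument for it. The bookkeeping you identify as the main difficulty (superscript tracking in Eq.~(\ref{eq:WPR}), disjointness of the excised projection sites) concerns only the construction of the restricted object, which the paper treats as already settled; carrying that out in full would still leave the containment---and hence the theorem---unproved.
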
 
\begin{proof}  
The statement of the theorem, 
that `$w_{\ro{princ}} \subseteq w^{\ro{R},T'}_{\ro{princ}}$ constitutes 
the condition for which Eq.~(\ref{eq:unr}) is true', 
is only fulfilled if the general form of $w_{\ro{princ}}$ in Eq.~(\ref{eq:WPG}) 
includes a description of extant phenomena explicitly, 
which takes the form shown in Eq.~(\ref{eq:WPR}). 
That is, one must show that 
$w^{\ro{R},T'}_{\ro{princ}}\subseteq w^{\ro{G},T'}_{\ro{princ}}$. 
This entails that the elements of  $w^{\ro{R},T'}_{\ro{princ}}$ and 
$w^{\ro{G},T'}_{\ro{princ}}$ are of the same form, differing only by use 
of a restriction. Therefore, in this case,  
the abstraction $\A_{T'}^{w_{\ro{princ}}}$
 is sufficient to ensure the truth of the elements in both sets. 
Note that the inclusion of $\A_{T'}^{w_{\ro{princ}}}$ takes the same form for 
both $w^{\ro{R},T'}_{\ro{princ}}$ and $w^{\ro{G},T'}_{\ro{princ}}$. 
Therefore, it is sufficient to show that 
$w^{\ro{R}}_{\ro{princ}}\subseteq w^{\ro{G}}_{\ro{princ}}$. 

Express Eq.~(\ref{eq:WPR}) in terms of abstractions, $\A_{v_i}$, 
recalling that 
$\PP\circ\A_E(\A_{y_i}) =$ \\
$\PP\circ\PP\circ\A_E(\A_{y_i}) = 
\PP\circ \A_{v_i}$: 
\begin{equation}
w^{\ro{R}}_{\ro{princ}} = \bigcup_{k=0}^{N-1}
\!\!\!\!\!\!
\bigcup_{\quad p\in[2,n]\backslash\bigcup_{\pi=0}^k\{p^{(\pi)}\}}
\!\!\!\!\!\!\!\!\!\!\!\!\!\!\!\!
\A_{\ro{set}}\circ\hat{\Com}^{(\sum_{j=1}^{n-p+2}i_j)+1}_{\C^{i_1\ldots i_p\ldots i_n}_W,1,v_{k+1}}c.
\label{eq:princp1}
\end{equation}
Choosing the value $m=0$ in $w_{\ro{princ}}^{\ro{G}}$ yields:
\begin{equation}
w_{\ro{princ},m=0}^{\ro{G},T'} =  \bigcup_{p=1}^n\bigcup_{k=0}^{N-1}
\A_{\ro{set}}\circ\hat{\Com}^{(\sum_{j=1}^{n-p+1}i_j)+1}_{\C^{i_1\ldots i_p\ldots i_n}_W,1,v_{k+1}}c.
\label{eq:princp2}
\end{equation}
The only difference between Eqs.~(\ref{eq:princp1}) and (\ref{eq:princp2}) 
is the choice of values of the iterator $p$.
To obtain $w^{\ro{R}}_{\ro{princ}}\subseteq w^{\ro{G}}_{\ro{princ}}$, 
it is sufficient to show that: 
\begin{equation}
[2,n]\backslash\bigcup_{\pi=0}^{k}\{p^{(\pi)}\} \subseteq [2,n] 
\quad \forall k\in[0,N-1]. 
\end{equation}
Recalling the restrictions of Eqs.~(\ref{eq:r1}) and (\ref{eq:r2}), 
 take $N \ll n$.   Now, 
$\bigcup_{\pi=1}^{k}\{p^{(\pi)}\}$ is a finite set of integers that is 
 a subset of $[2,n]$:
\begin{align*}
\bigcup_{\pi=1}^{k}\{p^{(\pi)}\} &\subseteq [2,n],\\
\mbox{where}\quad \{p^{(0)}\} &= \emptyset. \,\\
\therefore \bigcup_{\pi=0}^{k}\{p^{(\pi)}\} &\subseteq [2,n]\\
\Rightarrow w^{\ro{R},T'}_{\ro{princ}}&\subseteq w^{\ro{G},T'}_{\ro{princ}}. 
\end{align*}
\end{proof} 
Note that the fact that $w_{\ro{princ}}^{\ro{R},T'}$ is a more restrictive set
 than  $w_{\ro{princ}}^{\ro{G},T'}$ does not mean that it is `smaller' in the 
sense of cardinality. 
Assuming a sufficiently large value for $n$ to accommodate all $N+1$ restrictions, 
one finds that: 
\begin{align}
|w^{\ro{G},T'}_{\ro{princ}}| &= \left(\frac{1}{2}(\bar{n}^2 - (N+1)^2)+
1\right)L^{\bar{n}-(N+1)} = |w^{\ro{R},T'}_{\ro{princ}}| \\
&\Rightarrow w_{\ro{princ}}^{\ro{R},T'}\cong w_{\ro{princ}}^{\ro{G},T'}. 
\label{eq:equiv}
\end{align}

The above observation in Eq.~(\ref{eq:equiv})  
provides a possible explanation for the appearance of 
the `unreasonable effectiveness' of mathematics. The set 
of relationships among extant phenomena, $w_{\ro{princ}}^{\ro{R},T'}$, is not 
smaller, in any strict sense, 
than the general set of relationships among abstractions, 
$w_{\ro{princ}}^{\ro{G}.T'}$. 
The countability of sets of phenomena filtering into a more restrictive  
and still countable form, $w_{\ro{princ}}^{\ro{R},T'}$, 
combined with the formalism for describing 
non-mathematical objects in a mathematical way, constitutes the metaphysical 
explanation for the `unreasonable effectiveness' of mathematics. 
In other words, there is no unreasonableness at all, but it is 
simply a mathematical 
consequence of the countability of phenomena, and 
the abstract description of objects that are not innately 
abstract. 

Note, however, 
that it is not to be expected that objects in general are statistically 
likely to be extant. A comparison of the cardinalities in Eqs.~(\ref{eq:cardc}) 
and (\ref{eq:cardcI}) shows that $|c_I| < |c|$ if $L$ is uncountably infinite 
and $\bar{n}$ is at most countably infinite. Therefore, in the formalism, 
extantness itself should be seen as a special occurrence, i.e. that it is 
not `bound to happen' in general. 

This demonstrates the power of metaphysical tools, in the form 
of principles and proofs, to address key philosophical issues in physics. 
That is, 
the process employed here 
was not physics itself, but philosophical argumentation 
applied to the abstractions of objects used in the practice of physics.

\vspace{-3mm}\section{Evidence for universes}

\vspace{-3mm}\subsection{Defining evidence} 
\label{subsect:ev}

The definition of evidence relies on the connection between a set of 
phenomena (to be called `evidence'), and the principles of a theory 
that the evidence supports. Note that it is assumed here that the 
sense in which the phenomena support or 
demonstrate an abstraction, such as the theory, is the same sense in which 
a theory can be said to entail the extantness of the phenomena. The symmetry 
between the two arguments has not been proved, however, since it relies 
on the precise details of often-imprecisely defined linguistic devices.

The description of evidence, using the formalism of Section \ref{sect:form}, 
takes a similar form to that of the description of \textit{Ansatz} for phenomena  
in Eq.~(\ref{eq:unr}), except that the direction of the correspondence 
is reversed:
\begin{align}
\A_{v_1},\,\A_{v_2},\ldots\,\mbox{are extant} &\Rightarrow c_{\ro{princ}} 
\,\mbox{is true},\\
\mbox{i.e.}\,\,\Big[\A_{v_i} = \PP\circ\A_E(\A_{y_i})\in V\Big] &\Rightarrow 
\Big[c_{\ro{princ}}=\PP\circ\A_{T}(w_{\ro{princ}})\Big]. 
\label{eq:ev}
\end{align}
The left-hand side of Eq.~(\ref{eq:ev}) restricts the form of the object, 
$w_{\ro{princ}}$ (representing the set of principles) through 
 $w_{\ro{princ}} \subseteq w^{\ro{R}}_{\ro{princ}}$. 
For the form of $w_{\ro{princ}}$ to entail the right-hand side of 
Eq.~(\ref{eq:ev}), it must also include the abstraction, $\A_{T'}$. 
Thus, the condition under which Eq.~(\ref{eq:ev}) is true, where 
phenomena constitute evidence for a set of principles, is: 
\begin{equation}
\label{eq:evcond}
\Big[w_{\ro{princ}}\subseteq w^{\ro{R},T'}_{\ro{princ}}\Big]\quad (=c_{\ro{cond}}). 
\end{equation}
This is the same condition obtained for the examination of principles entailing 
extant phenomena, in Eq.~(\ref{eq:unr}).  

There is a duality between the two scenarios, which can be expressed in the 
following manner. 
If the condition of Eqs.~(\ref{eq:thcond}) and (\ref{eq:evcond}) is true, 
 then:
\begin{equation}
\Big[c_{\ro{princ}} = \PP\circ\A_T(w_{\ro{princ}})\Big]
\Leftrightarrow \Big[\A_{v_i}=\PP\circ\A_E(\A_{y_i})\Big].
\end{equation}
That is, relationship between principles and evidence is symmetrical, in a sense. The sort of phenomena entailed by a theory is of exactly the same nature 
as the sort of phenomena that constitutes evidence for such a theory. 
This leads one to postulate an object, $c_{\ro{D}} = \PP\circ\A_T(w_{\ro{D}})$, 
which represents 
this duality, and 
one may identify a \emph{Duality Theorem}, which takes the form:
\begin{equation}
\boxed{
w_{\ro{D}} = \Big\{\Big[\PP\circ\A_T(w_{\ro{cond}})\Big]
\Rightarrow\Big[\Big(c_{\ro{princ}}= \PP\circ\A_T(w_{\ro{princ}})\Big) 
\Leftrightarrow 
\Big(\A_{v_i}=\PP\circ\A_E(\A_{y_i})\Big)\Big]\Big\}.}
\end{equation}
The theorem is a consequence of the fact that 
the application of the restrictions acting upon $w_{\ro{princ}}$ 
commute with each other in the formalism. 

Note that, in attempting to clarify a  term ill-defined in colloquial 
usage, we have arrived at quite a strict definition of evidence:  
if $w_{\ro{princ}}$ is to constitute 
a set of principles describing the elements, $v_i$, it must at least take the 
form of a description based explicitly on all $v_i$ elements. 
Any part of $w_{\ro{princ}}$ that does not lie in 
$w_{\ro{princ}}^{\ro{R},T'}$ is not relevant for consideration as being 
supported by the evidence. 

\vspace{-3mm}\subsection{The relating theorem and the fundamental object} 
Since the $I$-extantness of some $c_I$ has been related to the mathematical 
existence of an object $\A_I(w)$, a primary question to investigate 
would be the $I$-extantness of the statement of \emph{this relation itself}. 
The statement of `the tying-in of the mathematical and non-mathematical 
objects' has certain properties that should deem the investigation of 
\textit{its own}  
$I$-extantness a nontrivial exercise. 

Denote the above statement, which is an \textit{Ansatz}, 
as $\Z_I(c_I)$, and let $c_I$ be $I$-extant. 
That is, for $c_I\equiv \PP\circ\A_I(w)$, $\A_I(w)$ exists; and let 
$\Z_I(c_I) \equiv \PP\circ\A_\Z\circ c_I$, for some $\A_\Z$. 
Recall that the assumed existence (in the mathematical sense) of the 
statement, $\Z_I(c_I)$, does not trivially entail $I$-extantness, 
under Supposition \ref{supp:sopcor}.  
To show that $\Z_I$ is $I$-extant, it is required that it can be put in 
the form:
\begin{equation}
\Z_I(c_I) = \PP\circ\A_I(w_\Z), 
\end{equation}
which implies that: 
\begin{equation}
\label{eq:ZI}
\PP\circ\A_\Z\circ c_I = \PP\circ\A_I(w_\Z).
\end{equation}
We would like to attempt to understand under what conditions this holds. 

Consider the scenario in which the $I$-extant form of $\Z_I$ does not 
exist. In this case, it is not possible to say that $\Z_I$ is not 
$I$-extant, since the statement relating existence and $I$-extantness 
has not been proved, and no information about $I$-extantness can be 
gained using this method. 
If, however, the $I$-extant form of $\Z_I$ does exist, 
then it is indeed certain that $\Z_I$ is $I$-extant. In other words, 
there is a logical subtlety that entails an `asymmetry': 
the demonstration of the existence of an object is enough to prove it, 
but the equivalent demonstration of its non-existence is not enough 
to disprove it, since the relied-upon postulate would then be undermined. 
Therefore, in this particular situation, 
unless a further logical restriction is found to be necessary to add 
in later versions of the formalism, it is sufficient to 
show that the $I$-extant form 
\emph{can} exist, for $\Z_I$ to be $I$-extant.
This is not true in general, due to Supposition \ref{supp:sopcor}, 
but holds for this special case. 
\begin{thm}
\label{eq:ZIext}
$\Z_I(c_I)$ is $I$-extant.
\end{thm}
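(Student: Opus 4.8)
The plan is to attempt the construction of a witness $w_\Z$ directly and then to lean on the existence-suffices logic established in the paragraph preceding the statement. First I would substitute the $I$-extant form of $c_I$, namely $c_I = \PP\circ\A_I(w)$ with $\A_I(w)$ assumed to exist, into the definition $\Z_I(c_I) = \PP\circ\A_\Z\circ c_I$, obtaining $\Z_I(c_I) = \PP\circ\A_\Z\circ\PP\circ\A_I(w)$. The target is to exhibit some $w_\Z$ with $\Z_I(c_I) = \PP\circ\A_I(w_\Z)$, that is, to solve Eq.~(\ref{eq:ZI}). My first reduction would be to observe that, by the labelling principle $\PP\circ\PP = \PP$, it suffices to secure the slightly stronger relation $\A_I(w_\Z) = \PP\circ\A_\Z\circ c_I$: applying $\PP$ on the left and collapsing the resulting $\PP\circ\PP$ back to $\PP$ returns exactly $\PP\circ\A_I(w_\Z) = \PP\circ\A_\Z\circ c_I = \Z_I(c_I)$.

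Next I would show that such a $w_\Z$ \emph{can} exist. Here I would use that the left inverse $\A_{I,L}^{-1}$ of $\A_I$ was never excluded (only the right inverse $\A_{I,R}^{-1}$ was shown not to exist), so that the candidate $w_\Z \equiv \A_{I,L}^{-1}\circ\PP\circ\A_\Z\circ c_I$ is a well-formed object whose construction inherits the existence of $\A_I(w)$ through $c_I$. I would then argue consistency using two facts already in hand: that $\A_I$ is not injective (the relation $\A_I(a) = \A_I(b)$ does not force $a=b$), and that $\PP$ is surjective by the Supposition of physics. Together these furnish enough freedom for $\A_I(w_\Z)$ to reproduce the projected object $\PP\circ\A_\Z\circ c_I$ without demanding a literal cancellation. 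Finally I would invoke the asymmetry noted before the theorem: to conclude $I$-extantness it is sufficient that the $I$-extant form \emph{can} exist, not that it be uniquely or constructively pinned down. Having exhibited a consistent candidate $w_\Z$, the conclusion that $\Z_I(c_I)$ is $I$-extant follows.

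The hard part will be the leftmost-position requirement on $\A_I$. In the raw expression $\PP\circ\A_\Z\circ\PP\circ\A_I(w)$ the leading abstraction is $\A_\Z$, not $\A_I$, so the object is not manifestly in the form demanded by the definition of $I$-extantness, and one cannot simply factor $\A_I$ to the left by literal operations: the literal commutator of Section~\ref{subsubsect:comm} requires inverses of every abstraction standing to its left, whereas $\PP$ has no inverse and $\A_I$ has no right inverse. I expect the resolution to route through the auxiliary commutator $\hat{\Com}$, which by design relocates abstractions across projections without invoking any inverse, exactly as it was deployed to build Eq.~(\ref{eq:cIcond}). The delicate point is to verify that positing $w_\Z$ introduces no inconsistency, so that the existence-suffices logic is legitimately applicable and the one-directional character of the argument (a demonstrated existence proves $I$-extantness, whereas a failure to construct does not disprove it) is respected.
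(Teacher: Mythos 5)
Your overall scaffolding---reduce to Eq.~(\ref{eq:ZI}), note the leftmost-position obstruction, then invoke the existence-suffices asymmetry---matches the paper's, but the central step fails. Your candidate witness $w_\Z \equiv \A_{I,L}^{-1}\circ\PP\circ\A_\Z\circ c_I$ must be verified to satisfy $\A_I(w_\Z) = \PP\circ\A_\Z\circ c_I$, and that verification requires $\A_I\circ\A_{I,L}^{-1} = 1$, i.e., the left inverse acting as a \emph{right} inverse of $\A_I$. But the nonexistence of $\A_{I,R}^{-1}$ is precisely the defining property of an $I$-extant abstraction (it is the first property proved in Section~\ref{subsect:ext}); the left-inverse identity $\A_{I,L}^{-1}\circ\A_I = 1$ only cancels when the inverse stands on the left. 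Your two fallback facts do not repair this: non-injectivity of $\A_I$ says distinct arguments may share an image, not that the particular object $\PP\circ\A_\Z\circ c_I$ lies in the image of $\A_I$; and the Supposition of physics supplies preimages under $\PP$, not under $\A_I$. So no consistent witness is ever actually exhibited, and the existence-suffices logic has nothing to apply to.

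The idea your proposal misses is that $\A_\Z$ is itself a free parameter in the definition $\Z_I(c_I)\equiv\PP\circ\A_\Z\circ c_I$ (``for some $\A_\Z$''), so one is not obliged to solve for $w_\Z$ with $\A_\Z$ held fixed. The paper writes the requirement as $\A_\Z\circ c_I = \A_I\circ w_\Z$ with $c_I$ in its expanded form from Eq.~(\ref{eq:cIcond}), and observes that by the labelling principle the equality can only hold under the identification $w_\Z\equiv c_I$ and $\A_\Z\equiv\A_I$; this choice is legitimate exactly because no right inverse of $\A_\Z$ was ever assumed, so taking $\A_\Z\in\tilde{\Om}^1(W)$ is available. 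The payoff is that $\Z_I(c_I)=\PP\circ\A_I(c_I)$ is then \emph{manifestly} in $I$-extant form with $\A_I$ automatically leftmost---no auxiliary-commutator relocation of the kind you anticipate is needed ($\hat{\Com}$ enters only through the expanded form of $c_I$, not as the engine of the proof)---and the identification immediately yields the correspondence corollary $\Z_I=\PP\circ\A_I$ of Eq.~(\ref{eq:ZIthm}), which your route would not produce even if the witness construction went through.
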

The above theorem, denoted the \emph{relating theorem}, may be verified 
in proving Eq.~(\ref{eq:ZI}). It is enough to show that 
$\A_\Z\circ c_I = \A_I\circ w_\Z$ for any $c_I$, where there exists an $\A_\Z$ 
such that $\A_I$ obeys the property: $\A_{I,R}^{-1}$ 
DNE, which is, in our general framework,  
the only distinguishing feature of $\A_I$ at this point. 
The demonstration is as follows:
\begin{proof}  
Let $\Z_I = \PP\circ\A_\Z\circ c_I$ exist, such that:
\begin{align}
\A_\Z \circ c_I &= \A_\Z\circ\PP\circ\A_I\circ\bigcup_{p=1}^n\bigcup_{m=0}^{i_p-1}
\A_{\ro{set}}\circ\hat{\Com}^{(\sum_{j=1}^{n-p+1}i_j)+1-m}_{C_W^{i_1\ldots i_p-m\ldots i_n},\,m+1,\,I'}c \\
&= \A_I\circ w_\Z, 
\end{align}
for any $w_\Z$. Due to the labelling principle, this can only be true 
if $w_\Z\equiv c_I$ and $\A_I\equiv \A_\Z$; that is, the abstraction of 
$c_I$ (above) is an $I$-abstraction: $\A_\Z \in \tilde{\Om}^1(W)$.  
This is a valid choice, since the existence of $\A_{Z,R}^{-1}$ was not 
assumed. 
\end{proof}
  
Therefore, the form of $\Z_I$ is now known:
\begin{align}
\Z_I(c_I) &= \PP\circ\A_I(c_I)\\
&= \PP\A_I\circ\PP\A_I(w).
\end{align} 
In words, what has been discovered is that the \textit{Ansatz} of $I$-extantness 
is equivalent to the \textit{Ansatz} in the statement `the $I$-extantness of $c_I$ 
is related to existence'.  
That is, the operation associated with the statement: `it is $I$-extant' 
($\PP\circ\A_I(w)$), when applied twice, forms the statement 
`its $I$-extantness is related to existence' 
($\PP\A_I\circ\PP\A_I(w)$); and it is the \emph{same operation}. 
This need not be the case in general, and so it is a nontrivial result that:
\begin{equation}
\label{eq:ZIthm}
\boxed{\Z_I = \PP\circ\A_I.} 
\end{equation}
Note that Eq.~(\ref{eq:ZIthm}), in this case, is not a definition, 
but a \emph{theorem}, to be known as the \emph{correspondence corollary} 
to the relating theorem. 
  
In a sense, $\Z_I$ is the fundamental $I$-extant object, 
in that it is the most obvious starting point for the analysis 
of the existence of $I$-extant objects in general. 
It also constitutes the first example of an object \emph{demonstrated to 
exist in a universe} (though, a clarification of distinguishing 
different universes is still required, and investigated in Section 
\ref{subsec:univ}). 

Recall, in construction of `types' in Eq.~(\ref{eq:type}), that 
familiar notions such as `chair', or other such objects, are 
brought into a recognisable shape using this formula. 
Though the types may not appear more recognisable at face value, 
the properties of such a construction align more closely with what is 
meant phenomenologically by such objects. 
In a similar fashion, the type of $\Z_I$ can be established, to create 
a more full, complete, or `dressed' version of the object. 

$\Z_I$ is an example of a $c_I$. Since projected objects cannot be 
related directly, a type will be constructed from instances of $\A_I$ 
(of which $\A_E$ is one), and the dressed fundamental object will be 
a projection of the type. Using the same argument used in deriving 
Eq.~(\ref{eq:type}), the set of $n$ observed instances of $\A_I$ 
takes the following form (acting on some set $W'$):
\begin{equation}
\bigcup_{j'=1}^n\A_\ro{set}\circ\hat{\Com}^2_{\tilde{\Om}^1(W'),i',j'}\A_I^{i'}(W').
\end{equation}
If each observed instance may be identified as the set of a certain $N$  
characteristics residing in $W$, then our intermediate set $W'$ can be dropped, 
and we find:
\begin{equation}
\label{eq:Aprime}
\A_I\circ \bigcup_{j=1}^{N_{i'}}\A_\ro{set}\circ\hat{\Com}^1_{W,i,j}w_i^{(i')} \in \A_I^{i'}(\Om^1(W)).
\end{equation}
In this case, an instance of $\A_I^{i'}$ contains more information than just 
a set of characteristics, since it is also known that it is $I$-extant. 
Therefore, it contains an additional abstraction operator. 
What is meant by the `fundamental type' therefore takes the form:
\begin{equation}
\label{eq:fundtype}
\boxed{R_\Z\equiv\PP r_\Z =
\PP\,\A'\circ\bigcup_{j'=1}^n\bigcup_{j=1}^{N_{j'}}
\A_\ro{set}\circ\hat{\Com}^1_{W,i',j'}
\A_I\circ\A_\ro{set}\circ\hat{\Com}^1_{W,i,j}
w_i^{(i')}\in\PP\,\Om^2\circ\tilde{\Om}^1\circ\Om^1(W),}
\end{equation}
where $\A'$ is the abstraction of relationship. Note that $R_\Z$ is, 
in general, an element of $\PP\Om^4(W)$.

\vspace{-3mm}\subsection{Distinguishing universes}
\label{subsec:univ}

In this section, we address the issue of classifying universes by their 
properties in a general fashion. An attempt can then be made to 
identify features that distinguish universes from one another, 
and thus clarify the definition of our own universe in a way that 
is convenient in the practice of physics. 

Suppose the definition of 
a universe, $\ca{U}$, to be the `maximal' list of objects that have 
the same character, that is, obeying the same list of basic properties. 
The list need not necessarily be finite, as each collection of 
properties could, in principle, represent a collection of infinitely 
many objects themselves. 
In the language of the formalism developed so far, a 
formula 
may be constructed from a generalised object $c$ by ensuring 
that each element of $\ca{U}$ 
is related to the content of the underlying principles, $w_{\ro{princ}}$. 
This formula will be analogous to Eq.~(\ref{eq:WPG}). 
\begin{supp} 
\label{supp:univ}
$\exists\, \ca{U}$,
such that a list of underlying principles may be configured to be 
 enumerable as a countable set, 
$w_{\ro{princ}} = u_1\cup\cdots\cup u_N$, for $N$ elements. 
(It is not required that $u_1\cup\cdots\cup u_N\in \ca{U}$.)
A universe based on these principles is the object represented 
by the largest possible set of the form:
\begin{equation}
\label{eq:univ}
\ca{U} = \PP\circ\bigcup_{p=1}^n\bigcup_{k=0}^{N-1}
\!\!\!\!\!\!
\bigcup_{\quad m\in[0,i_p-1]\backslash\bigcup_{\mu=0}^k\{m^{(\mu)}\}}
\!\!\!\!\!\!\!\!\!\!\!\!\!\!\!\!
\A_{\ro{set}}\circ\hat{\Com}^{(\sum_{j=1}^{n-p+1}i_j)+1-m}
_{\C^{i_1\ldots i_p-m\ldots i_n}_W,m+1,u_{k+1}}c,
\end{equation}
with respect to a generalised object, $c$. 
\end{supp}
Note that extantness is a universal property, in that it can be 
defined in the formalism regardless of the universe in which it is 
extant. It may, but it is not required that it constitute one of the $N$ 
underlying principles of a universe. 

The distinction between different universes is largely convention, 
based on the most convenient definition in practising physics. 
One such convenience is the ability to arrive at a \emph{consistent} definition 
of the universe. This is not the case for a na\"{i}vely-defined universe 
  required to contain all possible objects, due to 
Cantor's Theorem \cite{Cantor}. 
In order to establish two universes as distinct, the 
following convention is adopted:
\begin{supp} 
\label{supp:dist}
Consider two consistently definable universes $\ca{U}$ and $\ca{U}'$. 
If the universe defined as the union, $X \equiv \ca{U}\cup \ca{U}'$ is 
inconsistent, 
then the universes $\ca{U}$  and $\ca{U}'$ are distinct. 
\end{supp}
Introducing a square-bracket notation, where $\ca{U}[\ldots]$ indicates 
that the underlying principles to be used in defining $\ca{U}$ are listed 
in $[\ldots]$, one may write $\ca{U} = \ca{U}[w_{\ro{princ}}]$ and 
$\ca{U}' = \ca{U}[w'_{\ro{princ}}]$. Define the following lists of principles 
 to be consistent:
\begin{align}
w_{\ro{princ}} &= u_1\cup \cdots u_{N-1},\\
w'_{\ro{princ}} &= u_N \cup u'_1\cup \cdots u'_{N'},
\end{align} 
but suppose the inclusion of both principles $u_{N-1}$ and $u_N$ to lead to 
inconsistency. It then follows that:
\begin{equation}
 X \equiv \ca{U}[u_1\cup \cdots u_{N-1}]\cup \ca{U}'[u_N \cup u'_1\cup 
\cdots u'_{N'}]\quad\mbox{is inconsistent.}
\end{equation}
\emph{Example:} The inconsistency of a set of principles can emerge 
in the combination of negation and recursion, as clearly demonstrated 
by G\"{o}del \cite{Godel} and Tarski \cite{Tarski}. 
If $u_{N-1}$ were to express a negation, such as `only contains 
elements that don't contain themselves', and $u_{N}$ were to enforce 
a recursion, such as `contains all elements', then Russell's paradox 
would result \cite{Russell}.

\vspace{-3mm}\subsection{Evidence for other universes}
\label{subsec:den}

The final denouement is to demonstrate that the fundamental type constitutes 
evidence for a universe distinct from our own. 
Consider $N$ abstract objects, $\A_{v_k}$, each of which represents 
an element of the fundamental 
type in Eq.~(\ref{eq:fundtype}). They may be expressed 
analogously to the operator 
$\A_I^{i'}$ defined in Eq.~(\ref{eq:Aprime}), 
for $\ca{N}$ sub-characteristics:
\begin{equation}
\A_{v_k} \equiv \PP\circ\A_I\circ \bigcup_{j=1}^{\ca{N}}\A_{\ro{set}}\circ\hat{\Com}^1
_{W,i,j}w_i^{(k)}.
\end{equation} 
Though the sub-characteristics themselves are not vital in this investigation, 
one can simply see that the objects are $I$-extant (by construction), 
by expressing them in the form:
\begin{equation}
\A_{v_k} = \PP\circ\A_I(\A_{y_k})\in V',
\end{equation}
where $V'$ denotes a set that contains at least all $N$ elements, $\A_{v_k}$. 
By determining the underlying principles describing $V'$, one may denote its 
maximal set as\, $\ca{U}'$. 
Since $V'$ is an abstract object existing as a subset of the objects
 that comprise our formalism, $F$, it follows that $\ca{U}'$ is the set of 
all abstracts, and cannot be consistently defined, due to Cantor's argument \cite{Cantor}. 
That is, by taking the maximal set of objects obeying this restriction, 
one arrives at a set containing itself, and all possible abstract objects, 
which is Cantor's universal set. 
This does not mean, however, that $V'$ itself is inconsistent;   
since $V' \subset F \subset \ca{U}'$, we are free to assume that $V'$ 
is constructed in such a way as to render it consistent, just as was 
assumed for $F$. 

If NF \cite{Quine} is used instead of ZFC as a framework, 
the maximal set $\ca{U}'$ can be defined consistently, and it isn't necessarily the case that  
$X \equiv \ca{U}\cup \ca{U}'$ is 
inconsistent, and the universes $\ca{U}$  and $\ca{U}'$ are \emph{not} mandatorily distinct in this case.
However, this does not fully align with Maddy's naturalism \cite{Maddy} avoiding external realms 
or Colyvan's holism \cite{Colyvan} of folding mathematical entities into the universe, 
as ZFC's validity in realism implies the existence of $\{x | x \notin x\}$ objects that 
would be left out of an NF construction. Essentially, specific objects have been \emph{left out} in order to accommodate 
a naturalist view, which have an existence to be accounted for, according to Mathematical Realism.
Akin with Hamkins' pluralism \cite{Hamkins}, we 
are then left with a robust consequence across different set theories. 
To show the need for an external universe, we only need to find one mathematically valid 
example, where, at the point of inclusion, a Russell Paradox results. 
Realism accepts ZFC as a valid framework, ensuring at least one inconsistent object necessitates and external  $\ca{U}'$.
Therefore, 
we are only left with proving that the 
full gamut of mathematical abstractions cannot be imported into $\ca{U}'$  
without generating an inconsistency.

Consider the physical universe, $\ca{U}$, 
which takes the form of Eq.~(\ref{eq:univ}), 
with the restriction that one of its underlying principles must be 
that all elements are extant. 
That is, let \,$\ca{U}$ only contain elements of the form $w = \PP\circ\A_E(c)
\in \ca{U}$. 
$\ca{U}$ may be still be 
consistently defined. 
Since $\ca{U}'$ is inconsistent, the definition of a composite universe 
$X \equiv \ca{U} \cup \ca{U}'$ is also inconsistent. \emph{Therefore, 
 by the convention established in Supposition \ref{supp:dist}, 
\,$\ca{U}$ and \,$\ca{U}'$ are distinct}. It also follows that $V'\subset \ca{U}'$ 
is also distinct from $\ca{U}$. 

Note that a statement that asserts the existence of an object in 
another universe simply based on a mathematical framework 
would be in danger of mistaking a systematic artefact of a theory for a 
genuine discovery. This is not what is asserted in this case. 
The statement is that the separate ‘universe’ (using the careful definition) 
under analysis is, in fact, that of the  abstractions themselves. 
The \emph{crux} of the proof is that, in that particular universe 
(not a physical universe), it is a special case where showing it 
exists abstractly is sufficient to show that it exists (in its own universe). 

For particular scientific theories that predict the existence 
of an entity, the `identification of the correctness of its underlying 
principles' becomes the main result of the experiments conducted upon it. 

\emph{Example:} The Dirac Equation was the sole 
contemporary development predicting the existence of the as-yet 
undiscovered anti-matter \cite{dirac1928}, 
which appeared only in the theory describing 
the relativistic motion of a spin-$1/2$ particle using the $4$-component 
bi-spinors that form a Clifford Algebra on the geometry of the four-dimensional 
Minkowski manifold \cite{dirac1930}. 
The equation has remained unchallenged since its 
experimental confirmation in 1929 \cite{close2009antimatter}. 

For physicists discovering structural features of matter, such as 
the shapes of galaxies, distributions of 
dark matter, phases of matter in the early universe, the unification 
of the fundamental forces into mathematical groups, etc., then 
these mathematical concepts are genuine discoveries, rather than 
inventions, unless one takes a solipsistic viewpoint. These 
mathematical objects exist, but in a way that is different to 
`ordinary' objects in the universe. 

The overall result of this work can thus be summarised by the statement that  
‘abstract objects can be demonstrated to be in a distinct universe from ours, 
as the set of both of them together cannot be defined in a consistent way 
logically.' 
The `fundamental object' discussed in the previous section is thus 
an example of a test-case scenario, especially constructed to demonstrate 
that, in principle, there are \emph{at least some} objects that must 
necessarily be in a universe other than our own, for reasons of logical 
consistency - to tame inescapable pathological inconsistencies in trying to 
incorporate them notionally into our own universe.

The pertinent underlying principle for $V'$ is simply that it be consistent; or 
more specifically, that an extant description of it be consistent: 
\begin{equation}
w_{\ro{princ}}' = Y',\quad\mbox{such that}\quad V' = \PP\circ\A_E(Y'),
\end{equation}
for some consistently definable $Y'$. 
It is reasonable to expect that the elements that comprise $V'$   
\emph{constitute evidence} for the extantness of $V'\in\ca{U}'$. 
This can be checked by determining that the condition for evidence  
is satisfied: 
\begin{equation}
\label{eq:finalcond}
w_{\ro{princ}}' \subseteq w_{\ro{princ}}^{\ro{R},T'}[v_k]. 
\end{equation}
This condition, however, is not satisfied in general, due to the fact 
that $Y'$ must contain an abstraction, $\A_{E'}$, which is not present 
in the general formulation of $w_{\ro{princ}}^{\ro{R},T'}$. This makes sense - 
that the truth of a statement does not necessarily entail an extantness. 
In the specific case above, though, we have considered the fundamental 
type of the general $I$-extantness, encompassing all abstractions of 
a specific form, as described in Section \ref{subsect:ext}. In this special 
case, the extantness required by both sides of Eq.~(\ref{eq:finalcond}) 
is the same, and we are simply required to demonstrate that: 
\begin{align}
\label{eq:finalproof}
w_{\ro{princ}}' &= Y' \subseteq w_{\ro{princ}}^{\ro{R},I'}[v_k],\\
\mbox{for}\,\, V' &= \PP\circ\A_I(Y').
\end{align}
\begin{proof}  
$Y'$ is a set containing $N+1$ abstractions, $\A_{v_k}$, with 
$\A_{N+1}\equiv \A_{I'}$:
\begin{equation}
Y' = \{\A_{v_1},\ldots,\A_{v_N},\A_{I'}\} = \bigcup_{k=0}^N\A_{\ro{set}}\circ 
\hat{\Com}^1_{V',i,k} \A_{v_i}. 
\end{equation}
The right-hand side of Eq.~(\ref{eq:finalproof}) takes the form:
\begin{equation}
w^{\ro{R},I'}_{\ro{princ}} = \bigcup_{k=0}^{N-1}
\!\!\!\!\!\!
\bigcup_{\quad p\in[2,n]\backslash\bigcup_{\pi=0}^k\{p^{(\pi)}\}}
\!\!\!\!\!\!\!\!\!\!\!\!\!\!\!\!
\A_{\ro{set}}\circ\hat{\Com}^{(\sum_{j=1}^{n-p+2}i_j)+1}_{\C^{i_1\ldots i_p\ldots i_n}_W,1,v_{k+1}}c 
\cup w_I.
\end{equation}
The left term can be pared down by choosing $n=2$, $i_1=0$ and $i_2=1$, 
and the right term, $w_I$, by choosing $n=1$, $i_1=1$:
\begin{align*}
&\Rightarrow \Big(\bigcup_{k=1}^N \A_{\ro{set}}\circ\hat{\Com}^2_{\PP\Om^1(W),1,v_{k+1}}c
\Big)\,\,\,\cup \,\,
\Big(\A_{\ro{set}}\circ\hat{\Com}^2_{\Om^1(W),m+1,I'}c\Big) \\
&= \{\PP\A_{v_1}(w),\ldots,\PP\A_{v_N}(w)\}\cup\{\A_{I'}(w)\} \\
&= \bigcup_{k=0}^N\A_{\ro{set}}\circ 
\hat{\Com}^1_{V',i,k} \A_{v_i}, \quad\mbox{for}\,\,\A_{N+1}\equiv \A_{I'}.
\end{align*}
\end{proof} 
The final line follows from the labelling principle, $\PP\circ\PP = \PP$, 
and the fact that each $\A_{v_k}$ is in extant form, $\PP\circ\A_I(\A_{y_k})$. 

Note that this proof demonstrates that, in principle, there are objects 
that exist in a universe different from our own, due to the constraints 
of logical consistency. The nature of the test-case object is quite 
rudimentary, but it serves as an initial example. 
The formalism as a whole, though, has been especially fruitful in producing 
the unanticipated results of immediate relevance to the scientific community; 
in particular, a possible explanation for the `unreasonable effectiveness' 
of mathematics is put forward. This represents an important 
development in the philosophical understanding of physics. 
Furthermore, the clarification of \emph{Ansatz}: the process of 
applying an hypothesis to the physical world, and then testing it against 
experiment, represents the main achievement of this work. 
The introduction of a framework within which one can interrogate 
the nature of how hypotheses are applied represents a long-term ambition, 
seemingly missing from the current literature,  
which future research can develop and refine. 

\vspace{-3mm}\section{Conclusion}
This manuscript has attempted to address key issues in physics from the 
point of view of philosophy. By adopting a metaphysical framework closely aligned with that used in the practice of physics, the 
philosophical tool of \textit{Ansatz} was examined, and the process of its use 
was clarified. In this context, a mathematical 
formalism for describing intrinsically non-mathematical 
objects was expounded. In examining the consistency of such a framework, 
a careful distinction between existence (in the mathematical sense) 
and `extantness' (in the sense of phenomena existing in the universe) was 
made. Using the formalism, a general condition for extantness was derived 
in terms of a generalised object, which incorporates the salient 
features of abstraction and projection to the non-mathematical world 
in a way easily manipulated. In principle, the formalism 
may make verifiable predictions, since properties (and the consequences 
of combinations of properties in the form of theorems) can be arranged 
to make strict statements about the behaviour or nature of a system 
or other general objects. 

As an example, a possible explanation of Wigner's `unreasonable 
effectiveness' of mathematics was derived. This demonstrates the ability 
of a metaphysical framework to address important mysteries inaccessible 
from within physics itself. 

Lastly, an attempt was made to classify other universes in a general fashion, 
and to clarify the characteristics and role of evidence for theories that
 provide at least a partial description of a universe.  
The connection between phenomena that constitute evidence  
and the theory itself was established in a proposed Duality Theorem.  
Instead of focusing on attempting an 
\textit{ad-hoc} identification of extra-universal phenomena 
from experiment, the formalism was used 
to \emph{derive} 
basic 
properties of objects that do not align with our universe. 
As a first example toward such a goal, a fundamental object was 
identified, which satisfies the necessary properties for evidence, and 
whose extantness does not coincide with our universe. 
This paves the way for future investigations into more precise details of  
the  
properties of objects and methods amenable to this type of formal inquiry. 

\section*{Declarations}

\textbf{Availability of data and material}: NA. \\
\textbf{Competing interests}: None. \\
\textbf{Funding}: None. \\
\textbf{Authors' contributions}: Jonathan M. M. Hall is the sole author of all work presenting in this manuscript. \\
\textbf{Acknowledgements}: I wish to thank Prof. Derek Abbott for his encouragement in completing this project despite many years without funding. 
I also wish to thank Mr John Lindsay for affording me the opportunity to speak publicly on this work Third Thursday Thinkers, held at the Naval, Military \& Air Force Club of South Australia in 2022.

\bibliographystyle{apalike} 
\bibliography{dualref} 

\end{document}